\documentclass[sigconf,nonacm]{acmart}
\setcopyright{none}
\usepackage{amsmath}
\usepackage{algorithm} 
\usepackage{algorithmic}
\usepackage{multirow}
\usepackage{svg}
\usepackage{subcaption}
\usepackage[skip=5pt]{caption} 

\newcommand{\best}[1]{\textbf{#1}}
\newcommand{\second}[1]{\underline{#1}}
\newtheorem{assumption}{Assumption}
\newtheorem{lemma}{Lemma}
\newtheorem{theorem}{Theorem}
\AtBeginDocument{
  }

\begin{document}

\title{CREWS: \underline{C}ollaborative \underline{R}obust \underline{E}dge \underline{W}iFi \underline{S}ensing with Asynchronous and Incomplete Observations}

\author{Yinan Chen}
\authornote{Both authors contributed equally to this research.}
\orcid{0009-0003-9634-0105}
\author{Yang Zhou}
\authornotemark[1]
\email{{chenyn229, zhouy676}@mail2.sysu.edu.cn}
\orcid{0009-0002-2316-3315}
\affiliation{
  \institution{Sun Yat-Sen University}
  \city{Shenzhen}
  \country{China}
}

\author{Xiaoxia Huang}
\email{huangxiaoxia@mail.sysu.edu.cn}
\affiliation{
  \institution{Sun Yat-Sen University}
  \city{Shenzhen}
  \country{China}}

\author{Pan Li}
\email{lipan@ieee.org}
\affiliation{
  \institution{Hangzhou Dianzi University}
  \city{Hangzhou}
  \country{China}}

\renewcommand{\shortauthors}{Chen et al.}
\begin{abstract}

Existing collaborative WiFi sensing systems rely on perfect node synchronization and complete data availability. However, real-world edge deployments suffer from heterogeneous computing and network dropouts, leading to asynchronous and incomplete features. We propose CREWS, a robust collaborative sensing framework that inherently resists these network volatility. First, CREWS employs a topology-agnostic aggregator invariant to the arrival order and subset size of incoming features. Second, rather than discarding delayed observations, it utilizes a staleness-aware adaptive replay mechanism. By treating stale features from lagging nodes as system-induced hard samples, CREWS transforms synchronization delays into beneficial training regularization. We theoretically prove the joint convergence of this architecture and demonstrate how replay bounds the bias-variance trade-off. Extensive evaluations and an 8-node heterogeneous hardware testbed demonstrate its superior resilience. Under severe conditions i.e., 50\% transient dropout rate or out-of-distribution jitter, CREWS restricts accuracy degradation to merely 2.2  percentage points, substantially outperforming state-of-the-art baselines.

\end{abstract}

\begin{CCSXML}
<ccs2012>
   <concept>
       <concept_id>10010147.10010178.10010219</concept_id>
       <concept_desc>Computing methodologies~Distributed artificial intelligence</concept_desc>
       <concept_significance>500</concept_significance>
       </concept>
   <concept>
       <concept_id>10003120.10003138</concept_id>
       <concept_desc>Human-centered computing~Ubiquitous and mobile computing</concept_desc>
       <concept_significance>300</concept_significance>
       </concept>
   <concept>
       <concept_id>10003033.10003083.10003094</concept_id>
       <concept_desc>Networks~Network dynamics</concept_desc>
       <concept_significance>300</concept_significance>
       </concept>
 </ccs2012>
\end{CCSXML}

\ccsdesc[500]{Computing methodologies~Distributed artificial intelligence}
\ccsdesc[300]{Human-centered computing~Ubiquitous and mobile computing}
\ccsdesc[300]{Networks~Network dynamics}

\keywords{WiFi sensing, edge computing, collaborative sensing, split learning, stale feature}

\maketitle
\pagestyle{plain}

\section{Introduction}

WiFi sensing has emerged as a promising foundation for device-free activity recognition, leveraging ubiquitous communication infrastructure to sense human activities ~\cite{wifall2017,wang2025vr}. A single transmitter-receiver link, though easy to deploy, captures only one angular projection of human motion and is thus susceptible to directional ambiguity. To overcome this, recent systems increasingly adopt collaborative multi-view sensing ~\cite{liu_unifi_2023}, where a distributed network of WiFi edge nodes jointly collects Channel State Information (CSI) and streams these observations to a central server for robust recognition~\cite{yang_efficientfi_2022}.

\begin{figure}[t]
  \includegraphics[width=0.9\linewidth]{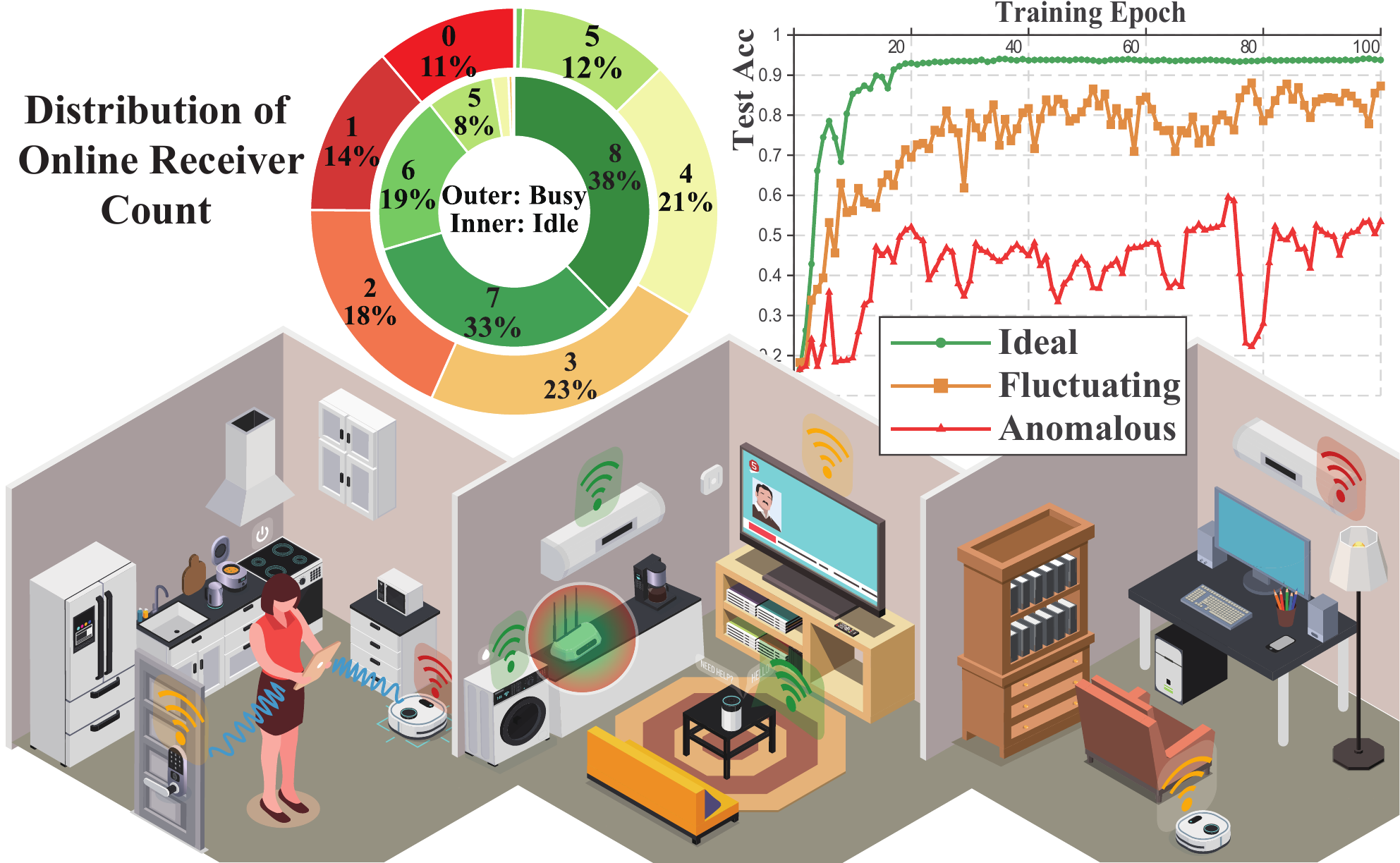}
  \caption{Transient disconnections in our 8-node testbed and their impact. The nested pie chart shows that full participation is rare, causing significant accuracy degradation in standard fusion models.}
  \Description{Test bed illustration, pie chart of node participation, and line chart comparing test accuracy across different network states.}
  \label{fig:daily}
\end{figure}

Despite encouraging progress, existing collaborative sensing systems~\cite{xiao_onefi_2021} require all sensing nodes to be fully available and synchronized. However, even in static deployments, WiFi nodes frequently experience transient disconnections~\cite{zhao_finding_2024} due to computational limits and saturated traffic~\cite{SlimSense2025,he2023sencom}. Furthermore, as applications move from controlled labs to real-world settings, device mobility and cross-room transitions introduce line-of-sight blockages and constantly varying link distances. These physical dynamics simultaneously disrupt spatial coverage and intensify network jitter. Thus, the sensing topology becomes inherently dynamic and fragmented, posing a fundamental challenge to robust WiFi sensing. Fig.~\ref{fig:daily} presents data from our 8-node testbed, revealing that full node participation occurs only 38\% of the time even in stationary conditions. As depicted by the accuracy curves in Fig.~\ref{fig:daily}, standard fusion models struggle with the missing data, suffering significant accuracy degradation and instability compared to an ideal state.

Under these dynamic edge constraints, the server must infer from a time-varying subset of asynchronously arriving observations. This introduces two critical challenges. First, the fusion mechanism of the model must be invariant to the arrival order and scalability. Second, uneven participation creates imbalanced optimization across receivers, making lagging nodes drift away from the shared representation space. Discarding their delayed features wastes valuable spatial information, while using them directly is difficult because they may be misaligned with the current feature space.

Prior WiFi sensing work has paid limited attention to collaborative inference under time-varying receiver availability. Most recent studies mainly focus on cross-domain learning, including domain adaptation and domain generalization ~\cite{ chen2022fidora, wangAirFiEmpoweringWiFibased2022a, zhou2022target}. While effective for relatively stable environment shifts, these methods generally assume fixed input compositions. Beyond cross-domain learning, efforts on packet loss recovery~\cite{zhao_finding_2024} and distributed edge learning~\cite{hernandez_wifederated_2022} mainly focus on signal reconstruction or offline training across devices. They still do not resolve robust real-time fusion when the active receiver set, arrival order, and feature distribution all vary over time.

In this paper, we rethink the role of network imperfections and
system heterogeneity. Rather than treating missing and delayed data as system failures, we leverage them as unique learning opportunities. Specifically, spatial dropouts force the model to generalize across view combinations, while delayed features act as natural regularizers. Based on this insight, we propose CREWS, a collaborative edge WiFi sensing framework that inherently tolerates network dynamics and maintains high accuracy across diverse deployment scenarios.
The main contributions of this work are as follows.

\begin{itemize}
    \item We propose a split-learning architecture for collaborative WiFi sensing featuring a topology-agnostic set aggregator. This structural design makes the model inherently immune to feature arrival disorder and subset cardinality variations, maintaining over 96\% accuracy even when half the nodes are lost and outperforming baselines by over 12 percentage points.

    \item An elastic parameter alignment mechanism is introduced to mitigate gradient starvation in intermittently participating receivers. Coupled with our staleness-aware adaptive replay strategy, this design converts network delays into beneficial training regularization. Results show that it reduces the representation drift of infrequently updated receivers and improves the stability of collaborative sensing in the presence of severe straggler shifts.

    \item We reveal that cached features from lagging encoders act as naturally arising useful samples, and mathematically establish the joint convergence of this architecture. The analysis proves how adaptive replay bounds the bias-variance trade-off under dynamic edge conditions.

    \item Comprehensive evaluations are conducted on a real-world testbed built on 8 heterogeneous edge computing platforms. Under severe conditions, with at most two receivers available, CREWS still achieves 86.4\% accuracy. Furthermore, in challenging real-world physical relocation scenarios, CREWS restricts accuracy degradation to merely 7.56  percentage points, establishing a remarkable advantage of over 60 percentage points over state-of-the-art baselines.
\end{itemize}

\section{Related Work}

\subsection{Distributed WiFi Sensing Systems}

WiFi sensing architectures have evolved from centralized processing~\cite{zhang_widar30_2021} to distributed deployment~\cite{11206233, 10.1145/3643832.3661895}. Early works simply upload CSI from all receivers to a central server for classification~\cite{wifall2017,YANG2023100703}, which incurs substantial uplink overhead and raises privacy concerns~\cite{yang_efficientfi_2022,shankar_autocompress_2026}. Federated approaches~\cite{hernandez_wifederated_2022} address these issues by keeping raw data on device and exchanging model updates instead, enabling training across distributed sensing sites while preserving privacy~\cite{gong_privacy-preserving_2024,li_democratizing_nodate,li_caring_2023}. More recent distributed systems further combine lightweight edge models ~\cite{gong_privacy-preserving_2024, yang_efficientfi_2022} with cloud-side processing ~\cite {zhang_federated_2023} to exploit both local efficiency and multi-view sensing gains. However, these methods implicitly assume stable receiver participation and reliable links. They do not account for collaborative inference under partial availability and transient interruptions. In this paper, CREWS addresses this challenge through topology-agnostic set fusion and staleness-aware replay in real-world deployments.

\subsection{Robust WiFi Sensing under Deployment Shifts}
Large-scale application of WiFi sensing remains challenging due to deployment shifts. Variations in environmental layout~\cite{turetta_environmental_2024,fan_multi-source_nodate}, device placement~\cite{wang_spatial_2019}, and hardware characteristics~\cite{ratnam_optimal_2024} alter the statistical properties of CSI, leading to inconsistent sensing performance across deployments~\cite{wang2025surveywifisensinggeneralizability}. To address the deployment shifts, prior work has largely applied Domain Adaptation (DA)~\cite{chen2022fidora,zhou2022target,knnmmd} and  Domain Generalization (DG)~\cite{wangAirFiEmpoweringWiFibased2022a,fan_multi-source_nodate} techniques. By aligning feature spaces or extracting domain-invariant representations across environments, DA and DG methods improve robustness across environments~\cite{zhang_wifi-based_2022}. These approaches primarily consider deployment differences as distribution shifts between relatively stable operating conditions. However, real-world edge deployments~\cite{zhu_experience_2025} exhibit intermittent receiver participation \cite{zhao_finding_2024}, leading to time-varying input configurations that are not fully captured by conventional cross-environment learning frameworks. Unlike prior work that focuses on environmental shifts, CREWS explicitly addresses the time-varying participation of edge nodes, which preserves inference stability under dynamic edge conditions.

\begin{figure*}[t]
  \centering
  \includegraphics[width=0.95\textwidth]{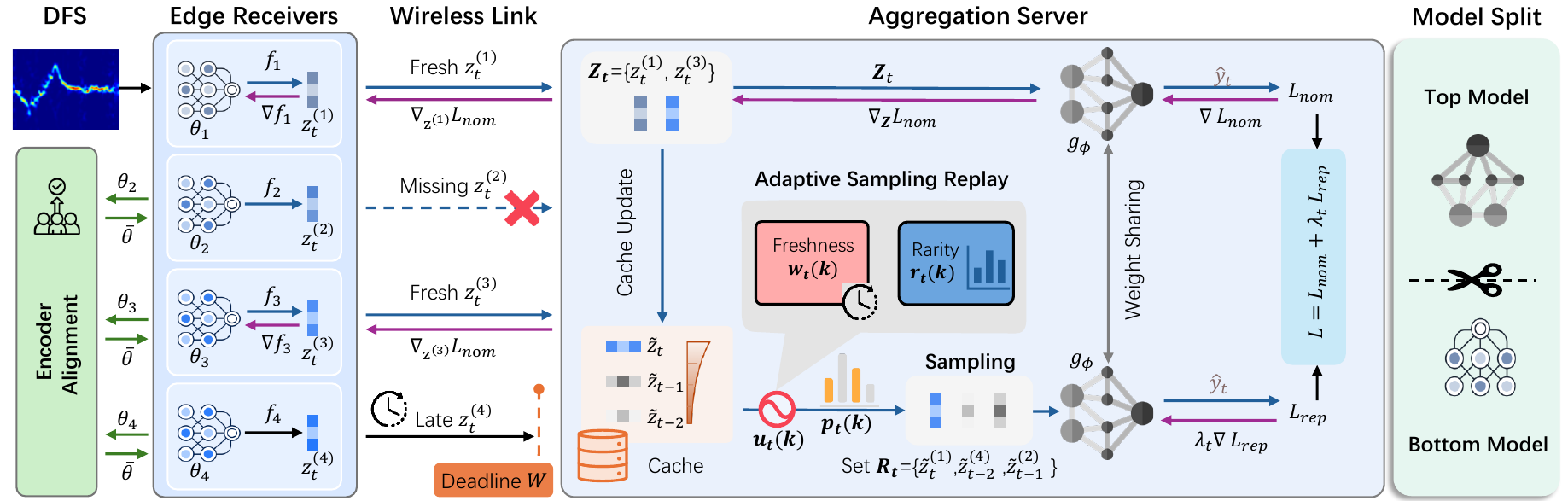}
  \caption{The overall architecture of CREWS. Following a split-learning paradigm, edge receivers (bottom models) extract and upload compact embeddings $z_t$ to the server (top model). Fresh features arriving before deadline $\mathcal{W}$ drive the nominal loss $\mathcal{L}_{\mathrm{nom}}$, while missing views are compensated through staleness-aware adaptive replay. The top model fuses both sets for robust prediction. During backpropagation, gradients flow back to update active bottom models, while periodic alignment synchronizes edge parameters to prevent representation drift.}
  \label{fig:frame}
\end{figure*}

\section{Motivation and Problem Formulation}
\label{sec:problem}

This section formalizes collaborative WiFi sensing under dynamic receiver availability. We first review CSI basics and the directional ambiguity that motivates multi-view fusion, then we define the inference model and optimization objective in this paper.

\subsection{Primer on WiFi CSI Sensing}
\label{sec:primercsi}
WiFi-based activity recognition exploits fine-grained CSI shaped by multipath propagation. For receiver~$r$ at subcarrier frequency~$f$, the CSI at time~$t$ is
\begin{equation}
  H(f,t)=\sum_{i=1}^{L}\alpha_i(t)\,e^{-j2\pi f\tau_i(t)},
  \label{eq:csi}
\end{equation}
where $L$ is the number of dominant paths, and $\alpha_i$, $\tau_i$ are the attenuation and delay of path~$i$. Human motion induces time-varying delays that manifest as Doppler Frequency Shifts (DFS). Applying the Short-Time Fourier Transform to the CSI stream~\cite{yousefi2017survey} yields a Doppler spectrogram~$X_t$ that reveals body-part velocity distributions over time.

A single link, however, suffers from \emph{directional ambiguity}: the observed Doppler shift $f_D\propto\|\vec{v}\|\cos\phi$ depends on the angle~$\phi$ between the motion velocity and the link direction~\cite{qian_widar_2017}. Movements parallel to the link produce near-zero signatures. Spatially distributed receivers resolve this by capturing complementary angular projections, motivating the collaborative multi-view architecture.

\subsection{Collaborative Inference under Edge Constraints}

We assume a collaborative network comprising $K$ distributed WiFi receivers, indexed by $k\in\{1,\dots,K\}$. To satisfy bandwidth and privacy~\cite{lian2024privacy} constraints, each receiver~$k$ runs a lightweight local encoder~$f_k(\cdot\,;\theta_k)$ that maps its DFS~$X_t^{(k)}$ into a compact representation:
 \begin{equation}
z_t^{(k)}=f_k\!\left(X_t^{(k)};\,\theta_k\right).
 \label{eq:z_enc}
 \end{equation}
 Ideally the server aggregates all $K$ uploaded representations simultaneously. However, spatial distribution and network dynamics create varying sensing and communication conditions. A node with discriminative Doppler signatures may suffer high communication latency while a node with a reliable link might yield sustained but less informative DFS (cf. Section~\ref{sec:primercsi}). To combine these complementary strengths despite asynchronous arrivals caused by heterogeneous hardware and fluctuating links, the system performs global fusion within a strict deadline $\mathcal{W}$. Therefore we define the \emph{available receiver set} as
\begin{equation}
  S_t=\bigl\{k\in\{1,\dots,K\}\;\big|\;a_t^{(k)}\leq\mathcal{W}\bigr\},
  \label{eq:available}
\end{equation}
where $a_t^{(k)}$ is the arrival time of $z_t^{(k)}$ measured from the start of the current inference round. The server must therefore infer from a time-varying partial observation set $\mathcal{Z}_t=\{z_t^{(k)}\}_{k\in S_t}$, whose cardinality and composition change across inference instances.

\subsection{Problem Definition}

Let $\mathcal{D}=\{(\mathbf{X}_t,y_t)\}_{t=1}^{N}$ be a dataset of synchronized multi-view samples, where $X_t=\{X_t^{(1)}, \ldots, X_t^{(K)}\}$ and label $y_t \in \mathcal{Y}$, with $\mathcal{Y}$ denoting the label space. The system comprises distributed encoders $\{f_k(\cdot\,;\theta_k)\}_{k=1}^{K}$ and a central aggregator~$g_\phi$, with global parameters $\Theta=\{\phi,\theta_1,\dots,\theta_K\}$. The optimization objective $\mathcal{F}(\Theta)$ is
\begin{equation}
  \min_{\Theta}\;
  \mathbb{E}_{(\mathbf{X},y)\sim\mathcal{D}}\;
  \mathbb{E}_{S\sim\mathcal{P}_{\mathrm{avail}}}\;
  \mathcal{L}_{\mathrm{CE}}\!\left(
    g_\phi\!\left(\{f_k(X_t^{(k)};\theta_k)\}_{k\in S}\right),\;y
  \right),
  \label{eq:objective}
\end{equation}
where $\mathcal{P}_{\mathrm{avail}}$ governs the stochastic receiver availability~$S_t$. Since $S_t$ varies dynamically in both cardinality and composition, the aggregator~$g_\phi$ must generalize across a combinatorial number of subset configurations while maintaining robust accuracy even when dominant viewpoints are missing or delayed.

\section{Design of the Staleness-Resilient Sensing Framework}
\label{sec:design}
This section presents CREWS, a collaborative sensing framework that co-designs the training procedure and inference architecture to jointly address structural and temporal uncertainties, as illustrated in Fig.~\ref{fig:frame}.

\subsection{Topology-Agnostic Split-Inference and Set Fusion}

CREWS follows a split-learning design where each edge receiver runs a local encoder and the central server performs aggregation and
classification. 

\textbf{Deriving Set Fusion from Edge Constraints.}
As defined in Eq.~\eqref{eq:z_enc}, each receiver~$k$ maps its observation to an embedding~$z_t^{(k)}$ through its encoder~$f_k(\cdot\,;\theta_k)$, preserving specific spatial information. By transmitting only compact embeddings, the split architecture reduces communication cost and preserves user's privacy.

From the server's perspective, $\mathcal{Z}_t = \{z_t^{(k)}\}_{k \in S_t}$ varies across rounds in both composition and arrival order due to practical uncertainties, such as wireless contention, queuing delays, and heterogeneous hardware latencies. Rather than treating this disorder as a system flaw, modeling the input as an unordered set provides a statistically grounded solution. Since the distributed receivers observe the same underlying activity $y_t$ from different viewpoints, their embeddings $\{z_t^{(k)}\}_{k \in S_t}$ are naturally \emph{exchangeable}. By de~Finetti's theorem, their joint distribution admits a latent-variable factorization:
\begin{equation}
    p\!\left(\{z^{(k)}\}_{k \in S_t} \mid \varphi\right)
    = \int \prod_{k \in S_t} p\!\left(z^{(k)} \mid \vartheta\right)
    p(\vartheta \mid \varphi)\, d\vartheta,
    \label{eq:definetti}
\end{equation}
where $\vartheta$ denotes the shared latent context (e.g., activity type, spatial position, and propagation environment) that renders all receiver observations conditionally independent, and $\varphi$ denotes the hyperparameters of the prior.

The factorization $\prod_{k \in S_t} p\!\left(z^{(k)} \mid \vartheta\right)$ dictates that the marginal likelihood depends on the observations $\{z^{(k)}\}_{k \in S_t}$ in a permutation-invariant manner.  To instantiate this statistical property, we leverage the DeepSets formulation~\cite{deepsets}, which proves that any continuous permutation-invariant set function can be universally represented as $\rho\!\left(\sum_x \phi(x)\right)$.
The standard sum-based formulation struggles with fluctuating subset sizes $|S_t|$ under edge dropouts, we introduce a cardinality-normalized set aggregator $g_\phi$:
\begin{equation}
    \hat{y}_t = g_\phi\!\left(\{z^{(k)}_t\}_{k \in S_t}\right) = \Phi_{\mathrm{pred}} \!\left(
      \frac{1}{|S_t|}
      \sum_{k \in S_t} \Psi_{\mathrm{emb}}
      \!\left( z_t^{(k)} \right)
    \right),
    \label{eq:aggregator}
\end{equation}
where $\Psi_{\mathrm{emb}}$ and $\Phi_{\mathrm{pred}}$ denote the element encoder and prediction head. The nominal training loss is then evaluated exclusively on these fresh arrivals:
\begin{equation}
  \mathcal{L}_{\mathrm{nom}}(t)
  = \mathcal{L}_{\mathrm{CE}}\!\big( \hat{y}_t,\, y_t \big),
  \label{eq:nominal_loss}
\end{equation}
where $y_t$ is the ground-truth label. 

Rather than merely averaging features, the $1/|S_t|$ factor structurally insulates the optimization dynamics from the wildly changing participation rates. We formally establish this stability property as follows.

\begin{lemma}[Cardinality-Invariant Gradient Norm]
\label{lemma:grad-norm}
Let $\psi$ denote the parameters of the element encoder. For any active subset $S_t$, the cardinality-normalized aggregator yields a gradient norm bounded independently of the subset size, i.e., $\|\nabla_{\psi} \mathcal{L}_{\mathrm{nom}}\| = \mathcal{O}(1)$. In contrast, a standard sum-based formulation induces a gradient norm that scales linearly with the number of active nodes, yielding $\|\nabla_{\psi} \mathcal{L}_{\mathrm{nom}}^{\mathrm{sum}}\| = \mathcal{O}(|S_t|)$. Detailed proof is provided in Appendix~\ref{appendix:grad-proof}.
\end{lemma}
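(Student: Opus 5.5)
The argument is a chain-rule computation at the pooled representation, followed by a triangle-inequality bound under standard boundedness assumptions. Let $h_t = \frac{1}{|S_t|}\sum_{k\in S_t}\Psi_{\mathrm{emb}}(z_t^{(k)};\psi)$ denote the pooled vector, and let $\delta_t = \partial \mathcal{L}_{\mathrm{CE}}(\Phi_{\mathrm{pred}}(h_t),y_t)/\partial h_t$ be the upstream gradient. Backpropagating through the mean gives
\begin{equation*}
\nabla_\psi \mathcal{L}_{\mathrm{nom}} = \frac{1}{|S_t|}\sum_{k\in S_t} J_\psi\!\left(z_t^{(k)}\right)^{\top}\delta_t ,
\end{equation*}
where $J_\psi(z)=\partial \Psi_{\mathrm{emb}}(z;\psi)/\partial\psi$ is the parameter Jacobian of the element encoder.

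I will assume two bounds. First, the element encoder has a uniformly bounded Jacobian, $\|J_\psi(z)\|\le G_\Psi$ on the relevant domain; this holds, for instance, for Lipschitz activations with bounded weights and bounded embeddings $z$. Second, the upstream gradient is bounded, $\|\delta_t\|\le G_\Phi$. The second bound follows because the gradient of cross-entropy with respect to logits is $\mathrm{softmax}-\mathrm{onehot}$, whose norm is at most $\sqrt{2}$, and $\Phi_{\mathrm{pred}}$ is $L_\Phi$-Lipschitz, so $G_\Phi=\sqrt2\,L_\Phi$. With these, the triangle inequality gives
\begin{equation*}
\|\nabla_\psi\mathcal{L}_{\mathrm{nom}}\|\le \frac{1}{|S_t|}\,|S_t|\,G_\Psi G_\Phi = G_\Psi G_\Phi .
\end{equation*}
This constant does not depend on $|S_t|$, which establishes the $\mathcal{O}(1)$ claim.

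For the sum-based aggregator, set $h_t^{\mathrm{sum}}=\sum_{k\in S_t}\Psi_{\mathrm{emb}}(z_t^{(k)})$. The same computation yields $\sum_k J_\psi(z_t^{(k)})^{\top}\delta_t^{\mathrm{sum}}$ with no $1/|S_t|$ prefactor. The triangle inequality then gives the upper bound $|S_t|\,G_\Psi G_\Phi$, so the norm is $\mathcal{O}(|S_t|)$.

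The main obstacle is that $\mathcal{O}(|S_t|)$ is meant as a statement of growth, not merely an upper bound: the terms in the sum could in principle cancel. To show the rate is attained, I will exhibit a non-degenerate case. If the views are aligned, so that the per-view contributions $J_\psi(z_t^{(k)})^{\top}\delta$ share a common nonzero component $v$ (for example, identical embeddings $z_t^{(k)}=z$ as the exchangeable common-factor limit of the de~Finetti factorization in Eq.~\eqref{eq:definetti}), then the sum equals $|S_t|\,v$ and its norm grows linearly.

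A secondary subtlety is that $\delta_t^{\mathrm{sum}}$ itself depends on $h_t^{\mathrm{sum}}$. Saturation of the softmax at large $\|h\|$ could shrink $\delta^{\mathrm{sum}}$ and offset the linear growth. I would handle this by stating the claim under the bounded-upstream-gradient assumption, interpreted locally in the regime before saturation, or by taking $\Phi_{\mathrm{pred}}$ linear near initialization, where $\delta$ is bounded below. I would state this caveat explicitly in Appendix~\ref{appendix:grad-proof}.
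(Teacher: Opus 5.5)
Your proposal is correct and follows the same route as the paper's proof in Appendix~\ref{appendix:grad-proof}: backpropagate through the pooled vector, apply the triangle inequality, and observe that the $1/|S_t|$ prefactor cancels the $|S_t|$ summands under mean pooling but survives under sum pooling. The one difference is that you replace the paper's heuristic $\|J_k\|\approx\bar{J}$ (and its implicit treatment of $\|\nabla_{\bar{h}_t}\mathcal{L}_{\mathrm{nom}}\|$ as $\mathcal{O}(1)$) with explicit uniform bounds $G_\Psi$ and $G_\Phi=\sqrt{2}L_\Phi$, which makes both upper bounds rigorous; your discussion of attainment and softmax saturation goes beyond the paper, which covers this only with ``$\approx$''.
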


\textbf{Gradient Starvation and Elastic Parameter Alignment.}
During backpropagation, an update to a local encoder $\theta_k$   occurs only if its features are successfully incorporated into the current inference round. This dynamic inclusion introduces a severe gradient starvation issue for slower nodes. Mathematically, the expected gradient flowing to node $k$  is gated by its availability probability:
\begin{equation}
    \mathbb{E}[\nabla_{\theta_k} \mathcal{L}_{\mathrm{nom}}] \propto \Pr(k \in S_t) \cdot \mathbb{E}_{\mathbf{X} \sim \mathcal{D}}\left[ \left( \frac{\partial z_t^{(k)}}{\partial \theta_k} \right)^\top \nabla_{z_t^{(k)}} \mathcal{L}_{\mathrm{nom}} \right].
    \label{eq:grad_starvation}
\end{equation}
This asymmetric update introduces a systematic flaw of \emph{Gradient Starvation}. Receivers with trailing network conditions accumulate fewer gradient steps. Consequently, their representations drift away from those of actively trained nodes, leading to misalignment in feature space.

To counteract representation drift, we constrain encoder divergences via Elastic Parameter Alignment (EPA). Since shallow layers extract general physical patterns~\cite{BengioNIPS2014}, periodically sharing their parameters enables reliable nodes to transfer robust feature extraction capabilities to gradient-starved, intermittently connected receivers. 

Specifically, at each alignment interval $T_{\mathrm{align}}$, the server gathers the reachable nodes $\mathcal{K}_t^{\mathrm{align}}$ to compute a global consensus $\bar{\theta}_{\mathrm{global}} = \frac{1}{|\mathcal{K}_t^{\mathrm{align}}|} \sum_{k \in \mathcal{K}_t^{\mathrm{align}}} \theta_k$. To pull the local encoders toward this consensus without entirely overwriting their learned states, we minimize the following proximal objective:
\begin{equation}
  \theta_k^+=\arg\min_{\theta} \left\{
    \big\| \theta - \theta_k \big\|_F^2
    + \frac{\mu}{1-\mu} \big\| \theta - \bar{\theta}_{\mathrm{global}} \big\|_F^2
  \right\}.
  \label{eq:elastic_alignment}
\end{equation}
Solving this unconstrained quadratic problem yields the exponential moving average (EMA) update $\theta_k^+ \gets (1-\mu)\,\theta_k + \mu\,\bar{\theta}_{\mathrm{global}}$.

\begin{figure}[t]
\centering
\begin{minipage}{0.48\linewidth}
  \centering
  \includegraphics[width=\linewidth]{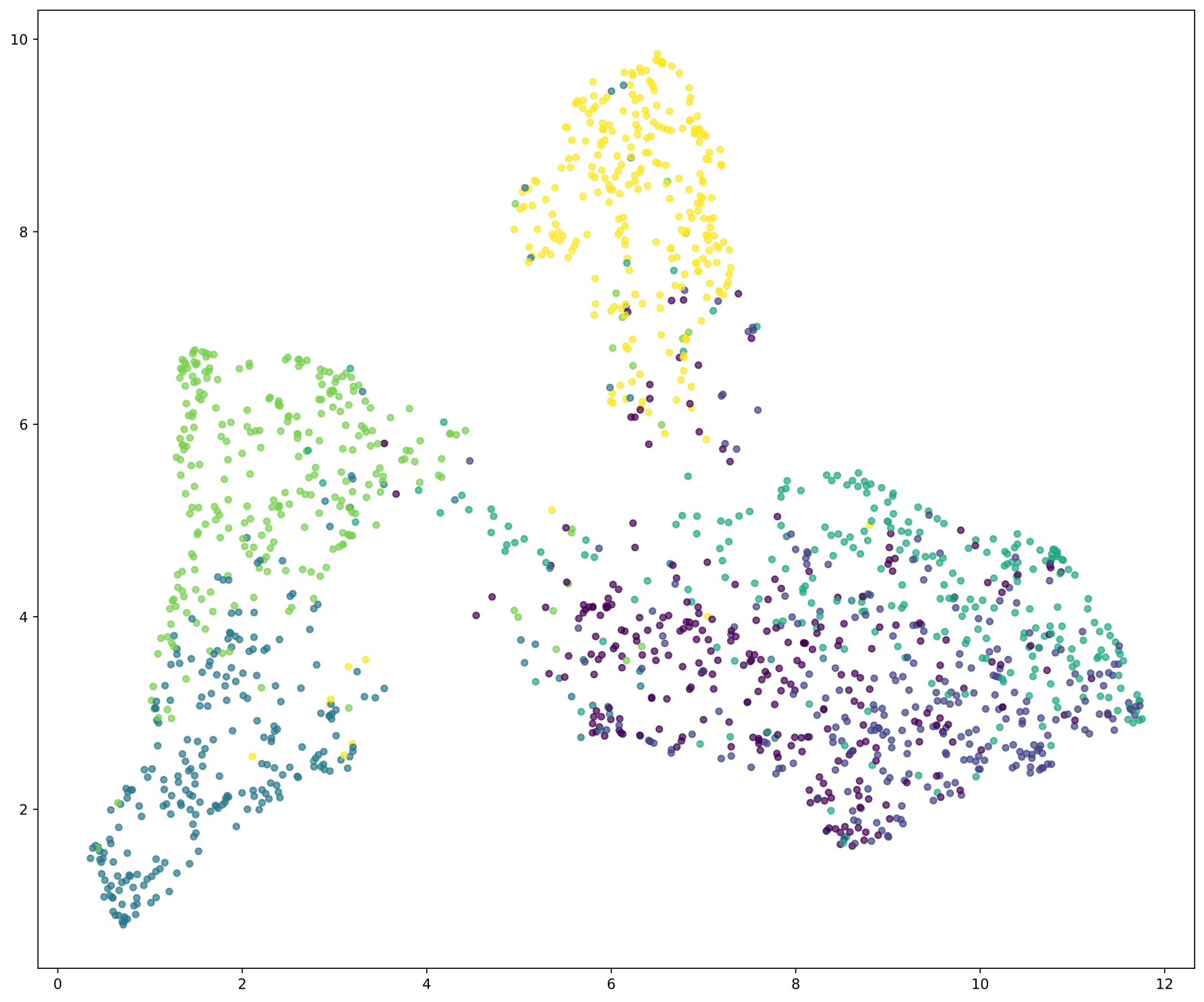}\\
  \small (a) Without alignment
\end{minipage}
\hfill
\begin{minipage}{0.48\linewidth}
  \centering
  \includegraphics[width=\linewidth]{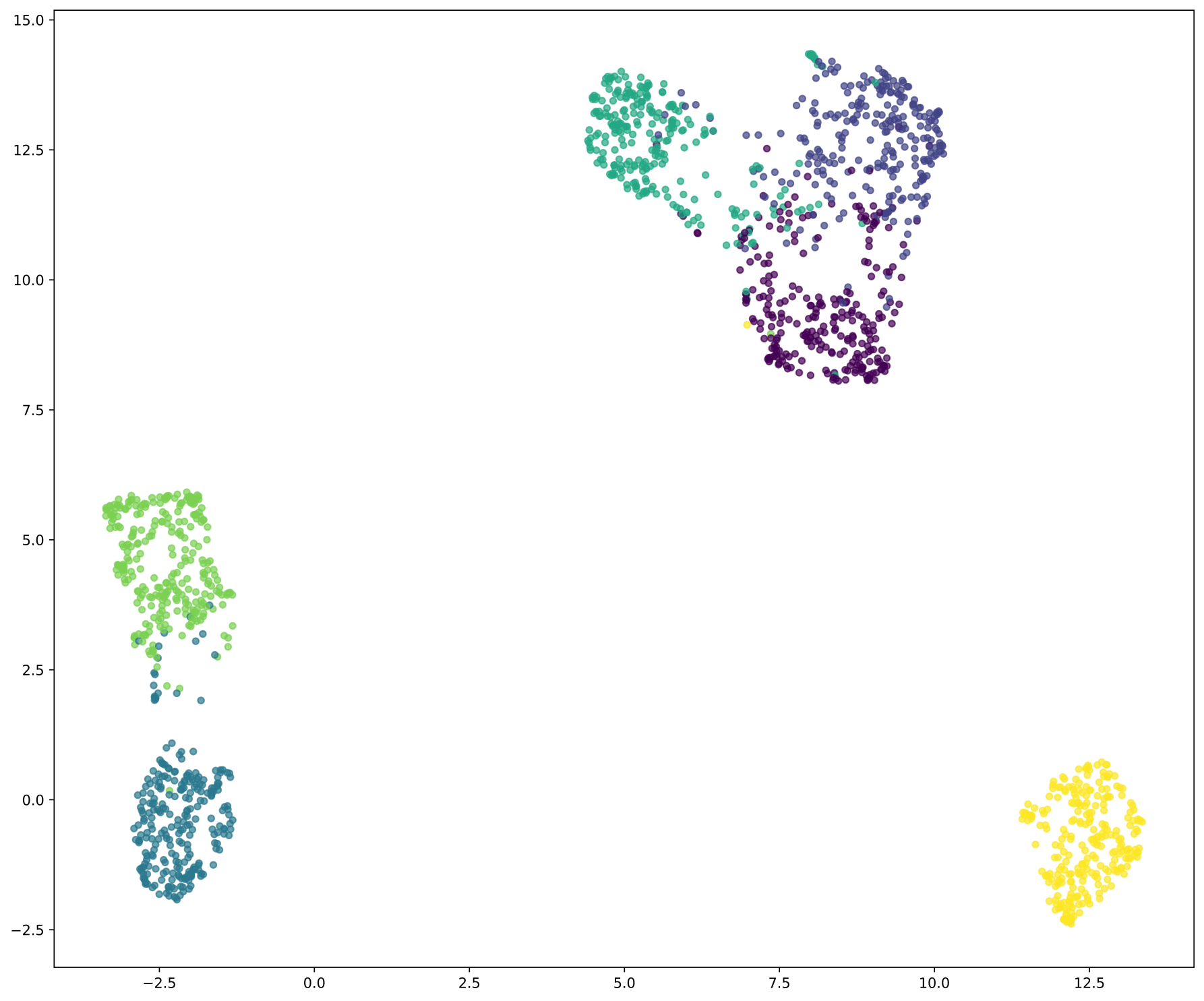}\\
  \small (b) Aligned
\end{minipage}

\caption{UMAP visualizations for rare participation nodes}
\label{fig:umap}
\end{figure}

The UMAP~\cite{Healy2024UMAP} projections in Fig.~\ref{fig:umap} show that elastic alignment effectively regulates the feature space, reorganizing the scattered embeddings of infrequent nodes into a coherent, class-discriminative manifold.

\subsection{Staleness-Resilient Feature Replay for Collaborative Learning}
\label{sec:replay}

The encoder alignment in Eq.~(\ref{eq:elastic_alignment}) ensures a shared representation space across receivers. However, during each training round, the server-side aggregator~$g_\phi$ is updated exclusively on the partial views from the observed subset~$S_t$. Due to network fluctuations, the system theoretically faces $2^K - 1$ possible input combinations. If the aggregator only learns from combinations formed under common conditions, it fails to generalize when dominant viewpoints disconnect at inference time.

\textbf{Stale Features as System-Induced Hard Samples.} 
To address this combinatorial explosion of views, an intuitive mitigation is to pad the missing inputs with their cached representations $\tilde{z}^{(k)} = f(X_{t'_k}^{(k)};
\theta_k^{(t'_k)})$. However, a critical discrepancy arises because the cached representation $\tilde{z}^{(k)}$ is generated by an older snapshot of the local encoder. After $age_t(k)= t - t'_k$ rounds of parameter updates, the cached feature is natively misaligned with the current representation space.

Although injecting such misaligned features is typically considered detrimental, we discover that this system-induced staleness acts as a powerful form of data augmentation. Assuming the local encoder $f_k$ is Lipschitz continuous, the parameter lag $\lVert\theta_k^{(t)} - \theta_k^{(t-\mathrm{age})}\rVert$ induces a bounded representation shift between the cached feature and its hypothetical fresh counterpart. As proven in Appendix~\ref{appendix:stale-analysis}, this shift is non-negatively aligned with the loss gradient. Analogous to the intuition of Goodfellow's adversarial training~\cite{goodfellow2015explaining}, the stale feature inherently acts as a loss-increasing "hard sample". It pulls the representation slightly out of the optimized region and forces the aggregator to learn resilient decision boundaries.

Unlike heuristic data augmentation, these hard samples are naturally introduced by stale features in heterogeneous edge environments. When intermittently connected nodes eventually reconnect, they will inevitably carry outdated parameters and produce similarly shifted features. By selectively retraining on these stale caches, CREWS proactively builds robustness against the specific representation drifts presented in real-world deployments.

\textbf{Staleness-Aware Adaptive Sampling.}
Stale features can serve as beneficial hard samples only when the parameter-induced shift $\epsilon$ is sufficiently small. To prevent uncontrolled cache aging from violating local Lipschitz continuity and corrupting the underlying label mapping, we penalize older cached features via an exponential decay weight:

\begin{equation}
  w_t(k) = \exp\!\big(-\gamma\,\mathrm{age}_t(k)\big),
  \quad \gamma \ge 0,
  \label{eq:freshness}
\end{equation}
where $\gamma$ controls the decay rate. 

\begin{figure}[t]
  \centering
  \includegraphics[width=0.93\linewidth]{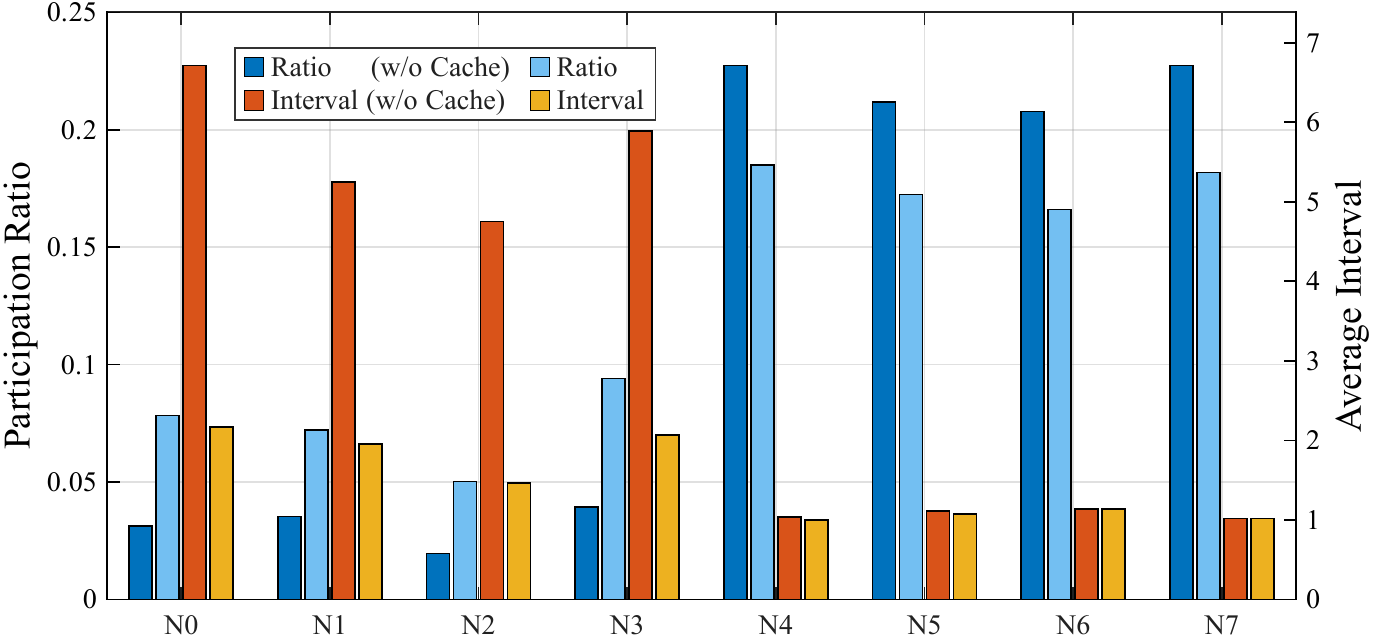}
  \caption{Participation Ratio Balancing \& Interval Decreasing}
  \label{fig:FIAD}
\end{figure}

\label{sec:rarity}
Furthermore, certain receivers participate infrequently over extended periods. To preferentially expose the aggregator to these underrepresented views, we keep a cumulative participation counter $c_t(k) = c_{t-1}(k) + \mathbb{I}[k \in S_t]$ (with $c_0(k) = 0$) in  \emph{rarity score}:
\begin{equation}
  r_t(k)
  =
  \frac{
    \bigl(c_{t-1}(k)+\xi\bigr)^{-\beta}
  }{
    \sum_{j=1}^{K}\bigl(c_{t-1}(j)+\xi\bigr)^{-\beta}
  },
  \quad \beta \ge 0,
  \label{eq:rarity}
\end{equation}
where $\xi > 0$ is a smoothing constant and $\beta$ controls the up-weighting strength for rare receivers.

\label{sec:sampling-prob}
Freshness and rarity may conflict in representation. A well-represented receiver may hold a perfectly fresh cache, while a rarely observed one may carry a stale feature that nonetheless fills a genuine coverage gap.  Thus, we combine both signals into a unified priority score\footnote{The  functional forms (log, square root, sigmoid) in Eqs.~\eqref{eq:score}–\eqref{eq:prob} are not critical as long as the calibration is monotonic and increases with both  $r_t(k)$ and  $w_t(k)$.}:
\begin{equation}
  u_t(k)  = \log r_t(k) \;+\; \eta_s\,\sqrt{w_t(k)},
  \label{eq:score}
\end{equation}
where $\eta_s \ge 0$ balances rarity against freshness. Since the absolute magnitude of $u_t(k)$ lacks a strict upper bound and may drift over time, we normalize these values relative to the system's current baseline to derive valid probabilities. Specifically, we apply a mean-centered sigmoid mapping:
\begin{equation}
  p_t(k)
  = \sigma\!\Big(\kappa\bigl(u_t(k)-\bar{u}_t\bigr)\Big),
  \label{eq:prob}
\end{equation}
where $\bar{u}_t = \frac{1}{K}\sum_{j=1}^{K} u_t(j)$ is the round-$t$ average score and $\kappa > 0$ controls the sharpness of selection.

\begin{figure}[t]
\centering
\begin{minipage}[b]{0.325\linewidth}
    \centering
    \includegraphics[width=\linewidth]{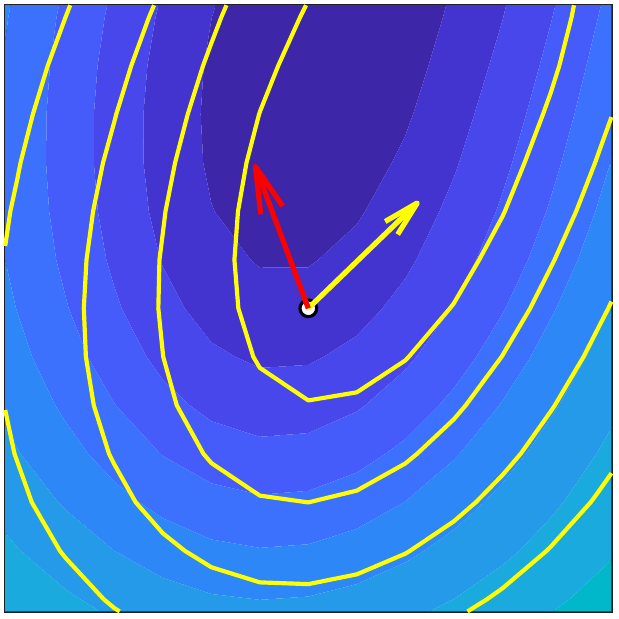}
        \small (a) using nominal loss
\end{minipage}
\hfill
\begin{minipage}[b]{0.325\linewidth}
    \centering
    \includegraphics[width=\linewidth]{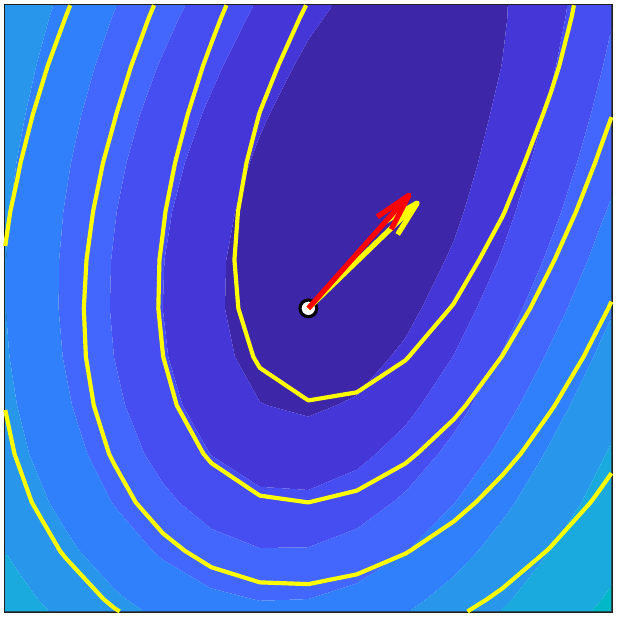}
    \small (b) using overall loss
\end{minipage}
\hfill
\begin{minipage}[b]{0.325\linewidth}
    \centering
    \includegraphics[width=\linewidth]{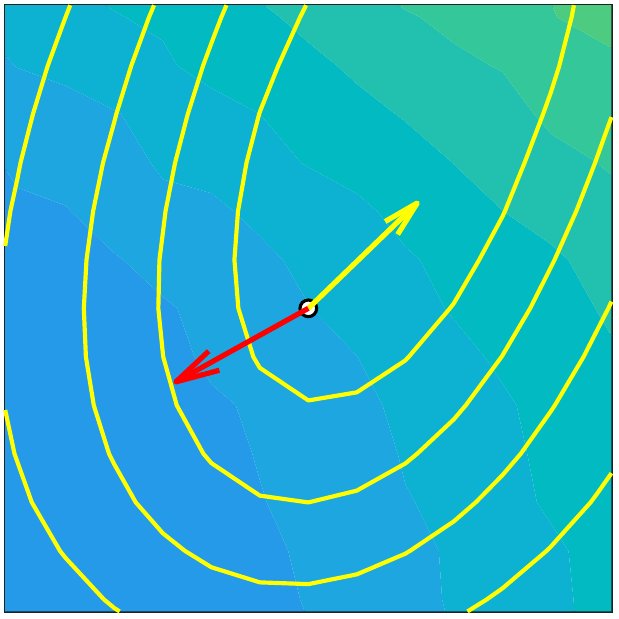}
    \small (c) using random noise
\end{minipage}
\caption{Loss landscapes under different training configurations. The yellow reference lines and arrows denote the oracle full-participation surface and its optimum.}
\label{fig:loss1x4-minipage}
\end{figure}

During each training round $t$, any absent receiver $k \notin S_t$ is included in the \emph{replay subset} $R_t$ via independent Bernoulli sampling with probability $p_t(k)$. As demonstrated in Fig.~\ref{fig:FIAD}, the adaptive replay effectively mitigates the skewed participation and prolonged absences of weak nodes ($N0$–$N3$). Balancing participation ratios and bounding cache staleness, CREWS successfully maintains spatial coverage under network fluctuations.
\begin{algorithm}[h]
\caption{Staleness-Resilient Collaborative Training}
\label{alg:main}
\begin{algorithmic}[1]
\FOR{round $t = 1,2,\dots$}
  \STATE $S_t \gets \{k : a_t^{(k)} \le \mathcal{W}\}$ \hfill $\triangleright$ available receivers
  \FOR{$k \in S_t$}
    \STATE $z_t^{(k)} \gets f_k(X_t^{(k)}\,;\theta_k)$;\; update cache $\tilde{z}^{(k)} \gets z_t^{(k)}$,\; $c(k)\!+\!+$
  \ENDFOR
  \STATE $\mathcal{L}_{\mathrm{nom}}(t) \gets \mathcal{L}_{\mathrm{CE}}\!\big(g_\phi(\{z_t^{(k)}\}_{k\in S_t}),\; y_t\big)$
  \STATE $M_t \gets \{1,\dots,K\} \setminus S_t$  \hfill $\triangleright$ missed receivers
  \FOR{$k \in M_t$ with cached $\tilde{z}^{(k)}$}
    \STATE update $w_t(k)$ and $r_t(k)$
    \STATE $p_t(k) \gets \sigma\!\big(\kappa\,({\log r_t(k) + \eta_s\sqrt{w_t(k)}} - \bar{u}_t)\big)$
  \ENDFOR
  \STATE Sample replay subset $R_{t} \!\sim\! \prod_{k} \mathrm{Bern}(p_t(k))$
  \STATE $\mathcal{L}_{\mathrm{rep}}(t) \gets \mathcal{L}_{\mathrm{CE}}\!\big(g_\phi(\{\tilde{z}^{(k)}\}_{k \in R_t}),\;y_t\big)$; $\lambda_t  \gets \lambda_0^{\bar{a}_t(R_t)}$
  \STATE Update $\phi,\;\{\theta_k\}_{k\in S_t}$ via $\nabla(\mathcal{L}_{\mathrm{nom}} + \lambda_t\,\mathcal{L}_{\mathrm{rep}})$ \hfill $\triangleright$ split gradient
  \IF{$t \bmod T_{\mathrm{align}} = 0$}
    \STATE $\theta_k \gets (1\!-\!\mu)\,\theta_k + \mu\,\bar{\theta}$ for reachable $k$ \hfill $\triangleright$ encoder alignment
  \ENDIF
\ENDFOR
\end{algorithmic}
\end{algorithm}

\textbf{Training Objective.}
\label{sec:training-objective}
The replay loss evaluates the cross-entropy on the retrieved subset $R_t$:
\begin{equation}
  \mathcal{L}_{\mathrm{rep}}(t)
  =
  \mathcal{L}_{\mathrm{CE}}\!\Big(
    g_{\phi}\!\bigl(\{\tilde{z}^{(k)}\}_{k \in R_t}\bigr),
    \;y_t
  \Big).
  \label{eq:replay-loss}
\end{equation}
Note that replay retrieves only stale cached features sharing the same ground-truth label $y_t$ as the current sample.

To suppress the impact of severely degraded caches, we dynamically weight the replay regularization based on the feature's freshness. Let $\bar{a}_t(R_t) = \frac{1}{|R_t|}\sum_{k \in R_t} \mathrm{age}_t(k)$ denote the mean cache age in $R_t$. The overall training loss combines the nominal loss and the replay regularization:
\begin{equation}
  \mathcal{L}(t)
  =
  \mathcal{L}_{\mathrm{nom}}(t)
  \;+\;
  \lambda_t\;
  \mathcal{L}_{\mathrm{rep}}(t),
  \label{eq:total-loss}
\end{equation}

where the dynamic coefficient $\lambda_t = \lambda_0^{\bar{a}_t(R_t)}$ governs the strength of the replay regularization, ensuring that older and more degraded caches exert less influence on the training process. The complete training procedure is summarized in Algorithm~\ref{alg:main}.

Within the split-learning gradient flow, the server distributes gradients of $\mathcal{L}_{\mathrm{nom}}$ to active receivers to update their local encoders, while gradients of $\mathcal{L}_{\mathrm{rep}}$ exclusively update the server-side aggregator $g_\phi$. This asymmetry ensures that encoder parameters are updated only from fresh and reliable signals, while exposing the aggregator to the enriched diversity of both fresh and replayed configurations.

Fig. \ref{fig:loss1x4-minipage} visualizes the loss landscapes \cite{li2018visualizing} compared to a full-participation oracle, an ideal baseline assuming perfect node availability. As shown in Fig. \ref{fig:loss1x4-minipage}(a), training with only the nominal loss produces misaligned minima that deviate from this ideal surface. Conversely, Fig.\ref{fig:loss1x4-minipage}(b) demonstrates that our staleness-aware replay reshapes the landscape to closely track the oracle optimum.

\subsection{Convergence Analysis}
\label{sec:convergence}

We analyze the convergence behavior of CREWS under dynamic receiver availability. Recall from Eq.~\eqref{eq:objective} that $\mathcal{F}(\Theta)$ denotes the global expected loss over the joint parameters $\Theta=\{\phi,\theta_1,\dots,\theta_K\}$, and $\mathcal{F}^*=\inf_\Theta \mathcal{F}(\Theta)$ denotes its infimum. Our goal is to characterize how fast CREWS converges to a stationary point of $\mathcal{F}$ over the aggregator $\phi$ and the edge encoders $\{\theta_k\}$. For tractability, each receiver joins $S_t$ independently with probability $p_k = \Pr(a_t^{(k)} \le \mathcal{W})$, with $\bar{p} = \tfrac{1}{K}\sum_k p_k$ and $p_{\min} = \min_k p_k$. The analysis proceeds under the following assumptions.

\begin{assumption}\label{ass:all}
(i) Smoothness and Noise. $\mathcal{L}_{\mathrm{nom}}$ and $\mathcal{L}_{\mathrm{rep}}$ are $L_1$-smooth in $\phi$; $\mathcal{F}$ is $L_2$-smooth in each $\theta_k$; stochastic gradients $g$ satisfy $\mathbb{E}[\|g - \nabla\mathcal{F}\|^2] \le \sigma_B^2$.
(ii) Bounded Subset Bias. The gradient biases of fresh subset $\mathcal{L}_{\mathrm{nom}}$ and replay subset $\mathcal{L}_{\mathrm{rep}}$ relative to the global gradient are bounded by $\sigma_1^2$.
(iii) Bounded Participation Bias. The participation-conditioned gradient of each encoder satisfies $\mathbb{E}\bigl\|\mathbb{E}[\nabla_{\theta_k} \mathcal{L}_{\mathrm{nom}} \mid k \in S_t] - \nabla_{\theta_k} \mathcal{F}\bigr\|^2 \le \sigma_2^2$.
\end{assumption}

Under these conditions, Theorem~\ref{thm:joint_convergence} shows that CREWS attains a controlled convergence bound under asynchronous arrivals and missing views.

\begin{theorem}[Joint Convergence of CREWS]\label{thm:joint_convergence}
Under Assumption~\ref{ass:all}, with learning rate $\eta_\phi \le 1/(2L_1)$ for aggregator and 
$\eta_\theta \le 1/(4L_2)$ for each encoder, the optimization error satisfies
\begin{equation}
\begin{split}\label{eq:joint}
  &\frac{1}{T}\sum_{t=0}^{T-1}\mathbb{E}\bigl[\|\nabla_\phi \mathcal{F}(\phi_t)\|^2\bigr]
  + \frac{1}{T}\sum_{t=0}^{T-1}\frac{1}{K}\sum_{k=1}^{K}
    \mathbb{E}\bigl[\|\nabla_{\theta_k} \mathcal{F}(\theta_k^{(t)})\|^2\bigr]\\
  &\le
  \underbrace{\mathcal{O}\!\left(\tfrac{1}{\eta_\phi T} + \tfrac{1}{\eta_\theta p_{\min} T}\right)}_{\text{Convergence Rate}}
  + \underbrace{\mathcal{O}\!\left(\overline{\,1 - 2\nu\lambda_t + \lambda_t^2\,}\;\sigma_1^2\right)}_{\text{Aggregator Bias}} \\
  &
  + \underbrace{\mathcal{O}\!\left(\tfrac{\bar{p}}{p_{\min}}\sigma_2^2\right)}_{\text{Encoder Heterogeneity}}
  + \underbrace{\mathcal{O}\!\left(\eta_\phi\,\overline{\,1 + \lambda_t^2\,}\;\sigma_B^2
      + \eta_\theta \tfrac{\bar{p}}{p_{\min}}(\sigma_B^2 + 2\sigma_2^2)\right)}_{\text{System Variance}},
\end{split}    
\end{equation}
where $\overline{(\cdot)} = \tfrac{1}{T}\sum_{t=0}^{T-1}(\cdot)$ denotes the $T$-round time average, and $\nu \in (0,1]$ quantifies the spatial complementarity between $S_t$ and $R_t$. The full proof is given in Appendix~\ref{appendix:convergence}.
\end{theorem}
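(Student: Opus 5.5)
The plan is a standard block-coordinate nonconvex SGD argument. We apply the descent lemma separately to the aggregator block $\phi$ and to each encoder block $\theta_k$, add the two inequalities, telescope over $t=0,\dots,T-1$, and divide by $T$. Since $\mathcal{F}$ is bounded below by $\mathcal{F}^*$, telescoping yields the $\mathcal{O}(1/(\eta_\phi T))$ term from the aggregator. For the encoders we will show that each effective step is scaled by $p_k \ge p_{\min}$, which gives the $\mathcal{O}(1/(\eta_\theta p_{\min}T))$ term. The remaining work is to bound the inner products and second-moment terms in each block by the bias and variance quantities of Assumption~\ref{ass:all}.

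\textbf{Aggregator block.} The update is $\phi_{t+1}=\phi_t-\eta_\phi(g^{\mathrm{nom}}_t+\lambda_t g^{\mathrm{rep}}_t)$. By $L_1$-smoothness,
\begin{equation*}
\mathcal{F}(\phi_{t+1})\le \mathcal{F}(\phi_t)-\eta_\phi\langle \nabla_\phi\mathcal{F}, g^{\mathrm{nom}}_t+\lambda_t g^{\mathrm{rep}}_t\rangle+\tfrac{L_1\eta_\phi^2}{2}\|g^{\mathrm{nom}}_t+\lambda_t g^{\mathrm{rep}}_t\|^2 .
\end{equation*}
We decompose each stochastic gradient as $\nabla\mathcal{F}+b+\xi$, with biases $b^{\mathrm{nom}},b^{\mathrm{rep}}$ satisfying $\|b\|^2\le\sigma_1^2$ and zero-mean noise $\xi$ with variance at most $\sigma_B^2$.

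The key modelling step is to make $\nu$ precise. We define $\nu$ through $\langle b^{\mathrm{nom}},b^{\mathrm{rep}}\rangle\le -\nu\sigma_1^2$. This says that $R_t$ is sampled from the complement of $S_t$, so the replay subset tends to cover the views that $S_t$ misses, and the two subset biases point in partially opposing directions. Under this definition, expanding gives
\begin{equation*}
\|b^{\mathrm{nom}}+\lambda_t b^{\mathrm{rep}}\|^2\le(1-2\nu\lambda_t+\lambda_t^2)\sigma_1^2 .
\end{equation*}
After Young's inequality $-\langle a,b\rangle\le -\tfrac12\|a\|^2+\tfrac12\|a-b\|^2$ on the cross term, this produces the Aggregator Bias term. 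The noise terms are independent, so their second moment is at most $(1+\lambda_t^2)\sigma_B^2$. Combined with $\eta_\phi\le 1/(2L_1)$, this absorbs the $\|\nabla\mathcal{F}\|^2$ part of the quadratic term and leaves the $\eta_\phi\overline{1+\lambda_t^2}\,\sigma_B^2$ variance term.

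\textbf{Encoder block.} The encoder $\theta_k$ is updated only when $k\in S_t$, because replay gradients do not flow to encoders. Conditioning on the availability indicator, the expected descent is $-\eta_\theta p_k\langle\nabla_{\theta_k}\mathcal{F},\mathbb{E}[g_k\mid k\in S_t]\rangle$. We split the conditional gradient as $\nabla_{\theta_k}\mathcal{F}$ plus a participation bias bounded by $\sigma_2^2$, using Assumption~\ref{ass:all}(iii). This gives a descent of order $\tfrac{\eta_\theta p_k}{2}\|\nabla_{\theta_k}\mathcal{F}\|^2$, at the cost of an error $\eta_\theta p_k\sigma_2^2$. The quadratic term is $\tfrac{L_2\eta_\theta^2p_k}{2}\mathbb{E}\|g_k\|^2$, and we bound $\mathbb{E}\|g_k\|^2$ by $2\|\nabla\mathcal{F}\|^2+2\sigma_2^2+\sigma_B^2$. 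Taking $\eta_\theta\le 1/(4L_2)$ absorbs the gradient part, which is why the constant is $1/4$ rather than $1/2$.

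We then sum over $k$ with weights $1/K$ and divide by $\eta_\theta p_{\min}$ so that the left side becomes the unweighted average $\frac1K\sum_k\|\nabla_{\theta_k}\mathcal{F}\|^2$. Bounding $\frac1K\sum_k p_k/p_{\min}$ by $\bar p/p_{\min}$ yields the Encoder Heterogeneity term and the $\eta_\theta\tfrac{\bar p}{p_{\min}}(\sigma_B^2+2\sigma_2^2)$ variance term.

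\textbf{Combining and main obstacle.} The EMA alignment step (Eq.~\eqref{eq:elastic_alignment}) has to be handled separately. It is an averaging, i.e.\ convex-combination, step, so we will either treat it as non-expansive or argue that its perturbation of $\mathcal{F}$ is of lower order and absorbed into the constants. This is one soft spot of the argument.

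The main obstacle, though, is the coupling between the two blocks. $\mathcal{F}$ is $L_2$-smooth in each $\theta_k$ separately, but the blocks are updated simultaneously, so a joint descent inequality needs cross-smoothness between $\phi$ and the $\theta_k$. I would first try an alternating or Gauss–Seidel view, applying the $\phi$-descent at fixed $\theta$ and then the $\theta$-descent at the new $\phi$, so that only the blockwise constants enter. Failing that, I would assume a joint smoothness constant $L=\max(L_1,L_2)$, which changes only the constants hidden in the $\mathcal{O}(\cdot)$. The final bound then follows by adding the two telescoped inequalities.
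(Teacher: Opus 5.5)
Your plan is essentially the paper's proof. It uses the same blockwise descent lemma, the same kind of posited negative correlation between fresh and replay biases producing the factor $1-2\nu\lambda_t+\lambda_t^2$, and the same $p_k$-scaled encoder descent with $\eta_\theta\le 1/(4L_2)$, $p_k\ge p_{\min}$ and $\frac{1}{K}\sum_k p_k=\bar p$. The one technical difference is that the paper rewrites the aggregator step through the normalized direction $\hat g_t$ with effective step $\tilde\eta_t=\eta_\phi(1+\lambda_t)\le 1/L_1$. It keeps the $-\tfrac{\tilde\eta_t}{2}\|\mathbb{E}\hat g_t\|^2$ polarization term, which your one-sided Young inequality discards, to cancel the quadratic term.

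The soft spots you flag are not resolved in the paper either. The paper ignores the alignment step and obtains the joint bound merely by adding the separately telescoped aggregator and encoder bounds, tacitly treating the blocks as decoupled. Be aware that any repair that telescopes the single potential $\mathcal{F}(\Theta_t)$ reproduces that additive form only for balanced step sizes. This covers joint smoothness and also your Gauss--Seidel ordering, which still needs cross-block smoothness because all gradients are evaluated at $(\phi_t,\theta^{(t)})$. Otherwise one of the two error budgets is inflated by a factor of order $\eta_\phi/(Kp_{\min}\eta_\theta)$ or its reciprocal.
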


Eq.~\eqref{eq:joint} decomposes the optimization error into four sources, each tied to a component of our design. The \emph{convergence rate} term decays at the standard $\mathcal{O}(1/T)$ speed, ensuring that CREWS reaches a stationary point once every receiver participates with $p_{\min}>0$. The \emph{aggregator bias} term captures the skew from training on partial views, jointly governed by the complementarity $\nu$ and the staleness decay $\lambda_t$. Fresh caches keep $\lambda_t$ large so that replay with strong complementarity pulls the aggregator toward the full-participation optimum, whereas aging caches drive $\lambda_t$ toward zero and prevent outdated features from distorting the objective. The \emph{encoder heterogeneity} term reflects the gap across multi-view CSI streams, which EPA bounds by pulling drifting encoders of gradient-starved nodes back to the shared representation space. The \emph{system variance} term combines the stochastic gradient noise $\sigma_B^2$ and the multi-view discrepancy $\sigma_2^2$. Both diminish with the learning rate, and replay contributes only a mild scaling of $\sigma_B^2$ through the factor $\lambda_t^2$.

In essence, this analysis shows that CREWS converts edge heterogeneity and unpredictable volatility into a jointly convergent optimization process for both edge and server.

\section{Evaluation}
\label{sec:eva}
\subsection{Implementation}
\label{sec:implementation}

\textbf{Dataset.}
We evaluate CREWS on two datasets, treating each receiver as an independent spatial node. We use the public Widar 3.0 benchmark (IEEE 802.11n) \cite{zhang_widar30_2021}, utilizing a subset of 6 gestures from 7 users, which is collected simultaneously from all 6 nodes. To evaluate complex, high-bandwidth deployments, we curated CoSense, a self-collected dataset using the nodes in Section~\ref{RWDJ}. It contains 6 activities from 6 volunteers across diverse locations and orientations, captured synchronously by 8 distributed receivers. This configuration introduces high spatial diversity, making robust multi-view fusion more challenging at the server.

\textbf{Baselines.} To evaluate the performance of the proposed CREWS, we compare it with three representative baselines that are well aligned with our task and deployment setting.

\begin{itemize}
\item \textbf{OneFi}~\cite{xiao_onefi_2021} is a transformer-based WiFi sensing model developed for one-shot gesture recognition. It adopts a self-attention backbone to model Doppler features and has shown strong performance in prior WiFi gesture recognition tasks. In our evaluation, it serves as a strong backbone baseline.

\item \textbf{FewSense}~\cite{yin_fewsense_2024} is a WiFi sensing framework designed for scalable and cross-domain activity recognition. It utilizes a CNN backbone adapted from the AlexNet architecture to extract discriminative features for WiFi sensing. It serves as a representative baseline for collaborative inference with decision-level fusion across multiple receivers.

\item \textbf{EfficientFi}~\cite{yang_efficientfi_2022} is a lightweight edge-cloud WiFi sensing framework designed for large-scale deployment via CSI compression. By jointly optimizing compression and recognition, it serves as a strong baseline for evaluating edge-cloud WiFi sensing ability under communication constraints.
\end{itemize}

For fair comparison in the distributed sensing setting, we adapt these baselines by partitioning their architectures into local encoders and cloud-side aggregators.

\begin{table}[t]
\centering
\caption{Robustness under different network dynamics on the Own and Widar datasets.
All network conditions are applied consistently during both training and inference.
Best results are shown in bold, and second-best results are underlined.}
\label{tab:network_dynamics}
\small
\setlength{\tabcolsep}{5pt}
\resizebox{\columnwidth}{!}{%
\begin{tabular}{lcccccccc}
\toprule
Model & \multicolumn{4}{c}{CoSense} & \multicolumn{4}{c}{Widar} \\
\cmidrule(lr){2-5} \cmidrule(lr){6-9}
& Ideal & Jitter & Loss$_{0.3}$ & Loss$_{0.5}$ & Ideal & Jitter & Loss$_{0.3}$ & Loss$_{0.5}$ \\
\midrule
OneFi       & \second{94.5} & 83.6 & 84.8 & 81.5 & 86.8 & 86.2 & 84.3 & \second{83.7} \\
FewSense    & 94.2 & \second{94.5} & \second{88.7} & 81.4 & 93.0 & \second{93.2} & \second{90.2} & 81.4 \\
EfficientFi & 93.8 & 88.3 & 86.1 & \second{83.8} & \second{94.9} & 82.0 & 75.5 & 69.5 \\
CREWS       & \best{98.5} & \best{98.5} & \best{98.3} & \best{96.3} & \best{95.2} & \best{94.8} & \best{93.6} & \best{89.7} \\
\bottomrule
\end{tabular}
}
\end{table}

\subsection{Robustness to Network Dynamics}

\label{sec:eval-network-dynamics}
As shown in Tab. \ref{tab:network_dynamics}, we evaluate the accuracy of all methods under four typical network conditions, including Ideal, Jitter, Loss(0.3), and Loss(0.5). We simulate Jitter by randomly shuffling the arrival order of receiver features, which emulates asynchronous and out-of-order arrivals caused by wireless contention and queuing delays. Loss(0.3) and Loss(0.5) apply drop probabilities of 0.3 and 0.5, respectively, to simulate smashed features that fail to arrive before the upload deadline due to link blockages or transient receiver outages.  While all methods perform competitively under ideal conditions, baseline performance degrades under these network dynamics, whereas our design remains stable.

Under the Jitter condition, CREWS yields highly consistent results with Ideal, achieving 98.5\% on CoSense and 94.8\% on Widar. While FewSense achieves stability through late-stage prediction averaging, our method captures deeper cross-receiver synergies by enforcing permutation-invariant fusion directly at the feature level. Conversely, architectures lacking explicit order-invariant mechanisms exhibit noticeable sensitivity to arrival perturbations. For instance, the accuracy of OneFi degrades from 94.5\% to 83.6\% on CoSense, while that of EfficientFi declines from 94.9\% to 82.0\% on Widar. This indicates that without structural guarantees, learning invariant representations from order-dependent pipelines remains difficult, even when the models are exposed to shuffled data during training.

The receiver loss scenarios further show that CREWS degrades gracefully under random receiver absence. At a moderate 0.3 loss rate, CREWS achieves accuracy of 98.3\% and 93.6\% on CoSense and Widar, respectively. Several baselines show noticeable performance drops under receiver loss. For example, the accuracy of OneFi and EfficientFi decrease to 84.8\% on CoSense and 75.5\% on Widar, respectively. As the loss rate increases to 0.5, CREWS shows only modest degradation, maintaining accuracy of 96.3\% and 89.7\%. Conversely, the accuracy of OneFi and FewSense drop to roughly 81.5\% on CoSense, and EfficientFi falls to 69.5\% on Widar. These results highlight the challenge of severe spatial incompleteness and show that CREWS can effectively handle both arrival disorder and receiver unavailability.

\subsection{Resilience to System Heterogeneity}

\label{sec:eval-heterogeneity}

While Section~\ref{sec:eval-network-dynamics}  assumes matched dynamics between training and testing, real-world edge environments are often non-stationary, so that the link conditions in on-site deployments can diverge significantly from those observed during training. We further design two challenging scenarios, Arrival Shift and Straggler Reversal, to investigate whether CREWS remains effective when deployment conditions deviate from its training distribution.

\begin{figure}[t]
    \centering
    \includegraphics[width=\columnwidth]{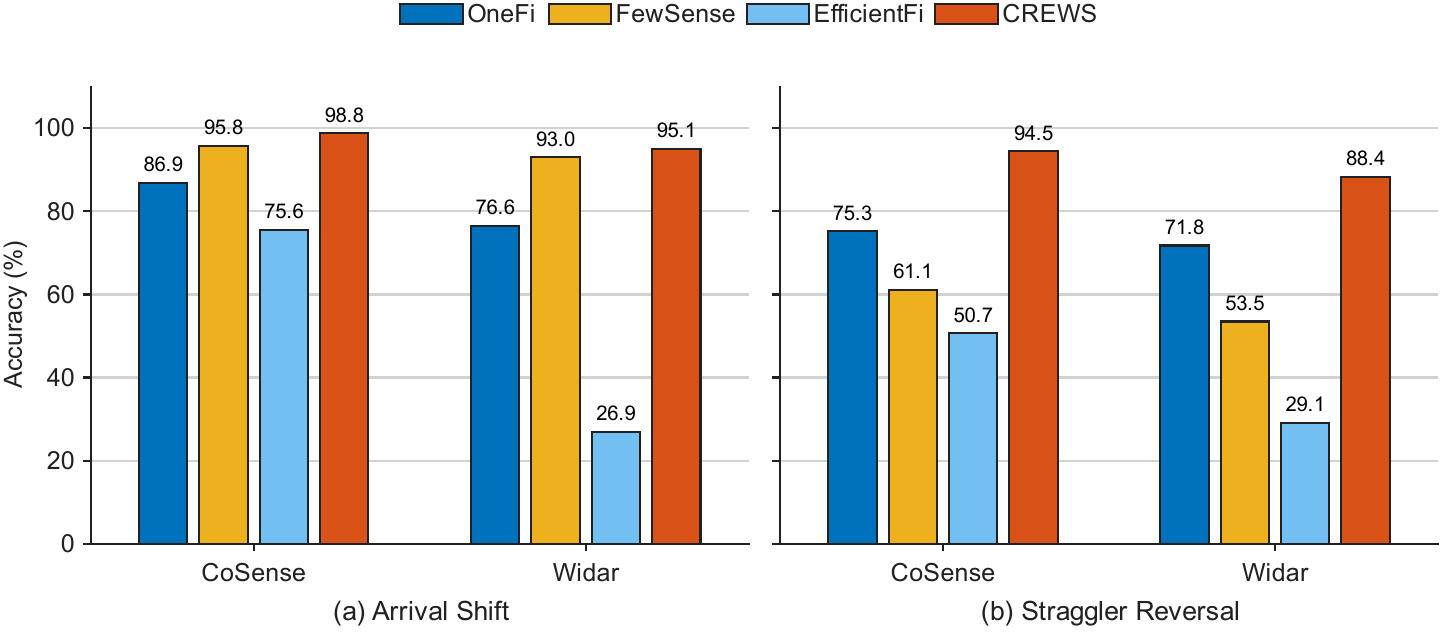}
    \caption{Robustness under different  system heterogeneity. 
    (a) Arrival Shift evaluates out-of-order feature arrival. 
    (b) Straggler Reversal evaluates reversed receiver latency patterns across training and inference.}
    \label{fig:arrival_straggler}
\end{figure}

\textbf{Arrival Shift} is simulated by lightly perturbing the arrival order during training, where the order of two features are randomly swapped with probability 30\%. At test time, the sequence is fully shuffled so every ordering is equally possible. This setup is realistic because it reflects the gap between controlled data collection and the highly unpredictable network jitter encountered in real-world edge deployments. As shown in Fig.~\ref{fig:arrival_straggler}(a), EfficientFi's accuracy drops to 75.6\%, nearly 20 points below its ideal accuracy, because its compressed representation is sensitive to changes in arrival order. OneFi partially mitigates this via attention but still fluctuates under severe jitter as it lacks explicit permutation invariance scheme. The accuracy of FewSense remains stable at 95.8\% and 93.0\% on the two datasets, consistent with its jitter robustness in Section~\ref{sec:eval-network-dynamics}. CREWS further improves the accuracy to 98.8\% and 95.1\% on the respective datasets. This suggests that our model remains robust to unseen arrival orders, rather than merely overfitting to a particular training-time arrival pattern.

\textbf{Straggler Reversal} simulates a realistic deployment shift in which timely receivers during training become stragglers at test time, and vice versa. In practice, such reversals can arise from link blockages or transient network congestion. Specifically, we partition the receivers into two groups and switch their dropout probabilities of 0.1 and 0.9 between training and testing. This shift is particularly challenging because not only the available receivers vary but also the updating frequency of encoders is disparate from the training. 

As shown in Fig.~\ref{fig:arrival_straggler}(b), this extreme reliability shift poses a profound challenge for existing architectures. The accuracy of OneFi remain 75.3\% on CoSense and 71.8\% on Widar dataset, but the model still struggles under the reversed shift, as attention alone cannot fully tackle representation drift. Besides, EfficientFi's accuracy degrade most severely to 50.7\% and 29.1\%, suggesting that compressed edge-cloud inference is particularly brittle when the straggler pattern is reversed. FewSense's accuracy degrade noticeably to 61.1\% and 53.5\%, because decision-level fusion is vulnerable when the main receivers at test time were not reliably optimized during training. Once previously lagging receivers become dominant, their weaker individual predictions are carried directly into the final decision. The decision-level fusion cannot leverage complementary features across receivers for rectification. CREWS, by contrast, achieves accuracy of 94.5\% on CoSense and 88.4\% on Widar despite the extreme reliability shift. Overall, CREWS is more robust to this shift, because encoder alignment reduces drift at the edge, while feature replay prepares the aggregator for feature combinations which are rarely observed during training.

\subsection{Ablation Studies}

Since our set-based architecture immunizes the model against arrival jitter, we further conduct ablation studies to evaluate individual module contributions under two distinct network dynamics. Specifically, we isolate the extreme spatial reliability mismatch of Straggler Reversal and the Loss (0.5).

\textbf{Performance under Straggler Reversal.} As illustrated in Tab.~\ref{tab:ablation_study}, under the 0.9 reliability shift, our full model substantially outperforms the vanilla backbone by 24.9 percentage points.  Integrating stale feature replay alone raises accuracy to 83.6\% because it exploits outdated features from intermittent connections. Alternatively, applying EPA individually reaches an accuracy of 92.1\% by synchronizing edge parameters and reducing representation drift. Their combination achieves the highest accuracy of 94.5\%, as EPA stabilizes the representations of the smashed feature, and the replay exposes the aggregator to underrepresented views that may become important at testing.

\textbf{Performance under Balanced Loss.} In Tab.~\ref{tab:ablation_study}, the vanilla backbone already performs strongly, reaching 93.4\% recognition accuracy under the uniform 0.5 drop rate setting. This stability benefits from the permutation-invariant architecture of CREWS, which resists moderate structural dynamics. Building on this strong foundation, the full model further elevates the accuracy to 96.3\%. The improvement suggests that drift-bounded delayed features introduce constructive perturbations into the learning process, thereby enhancing the model's generalization.

\begin{table}[t]
\centering
\caption{Ablation study under two receiver participation regimes, including Straggler Reversal and Balanced Loss. 
B denotes the backbone; R denotes replay; F denotes EPA; and P denotes the PlugVFL style dropout.}
\label{tab:ablation_study}
\small
\setlength{\tabcolsep}{6pt} 
\begin{tabular}{c cccc cc}
\toprule
& \multicolumn{4}{c}{CREWS} & \multicolumn{2}{c}{PlugVFL} \\
\cmidrule(lr){2-5} \cmidrule(lr){6-7}
Accuracy (\%) & B & B+R & B+F & B+F+R & B+P & B+F+P \\
\midrule
Straggler Reversal & 69.6 & 83.6 & 92.1 & \textbf{94.5} & 60.0 & \textbf{93.5} \\
Balanced Loss      & 93.4 & 94.1 & 96.2 & \textbf{96.3} & 91.8 & \textbf{94.2} \\
\bottomrule
\end{tabular}
\end{table}

\textbf{Comparison with Conventional Dynamic Sampling.} We further compare CREWS with PlugVFL~\cite{10825534}, a baseline utilizing node-wise random dropout. Although random dropout aims to reduce dependency on specific nodes, it consistently underperforms our strategy and severely degrades under the 0.9 reliability shift without encoder sharing. This confirms that merely mimicking random absence is insufficient for robust collaborative sensing. Instead, our Adaptive Replay explicitly prioritizes underrepresented receivers while bounding feature staleness. By exposing the aggregator to realistic challenging patterns rather than naive random noise, it yields superior robustness.

\begin{figure}[t]
    \centering
    \begin{minipage}{0.47\linewidth}
        \centering
        \includegraphics[width=\linewidth]{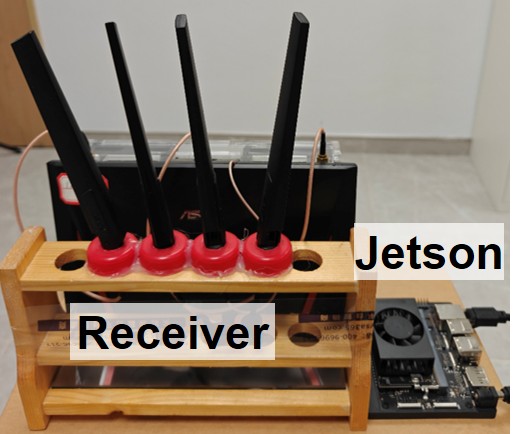}
         (a) An Edge Node.
    \end{minipage}
    \begin{minipage}{0.48\linewidth}
        \centering
        \includegraphics[width=\linewidth]{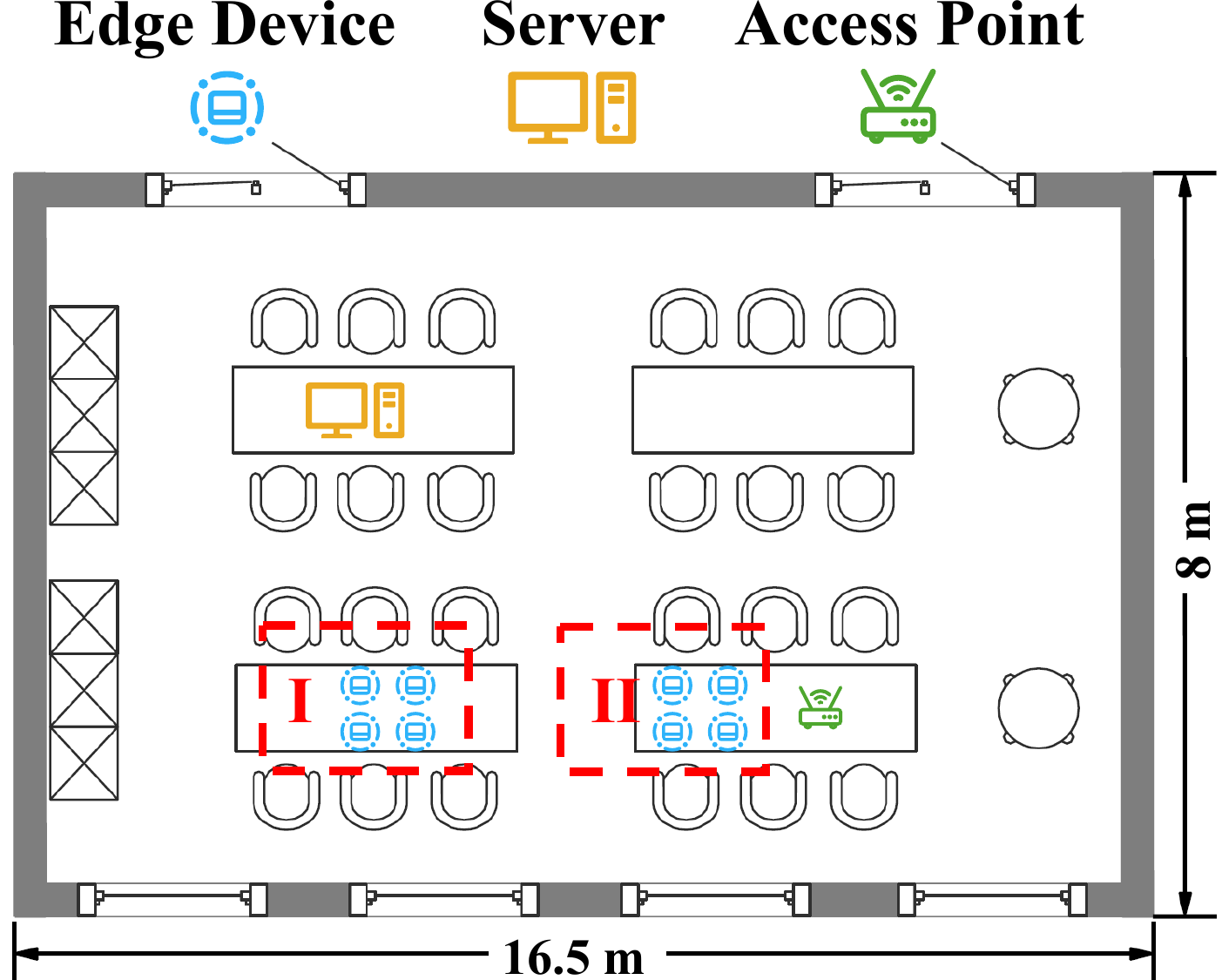}
    (b) Testbed Deployment.
    \end{minipage}
    \caption{Illustration of the data collection and deployment platforms.}
    \label{fig:platforms}
\end{figure}

\section{Real-World Deployment on Jetson}
\label{RWDJ}
To validate CREWS in practical scenarios, we deploy a heterogeneous multi-node Jetson testbed. 
Section~\ref{HSB} first profiles the computational and communicational latencies which cause inevitable  asynchronous and incomplete arrivals, then evaluates scalability under this constraint. Section~\ref{RDD} further examines recognition robustness under physical deployment changes.

In Fig.~\ref{fig:platforms}(a), each NVIDIA Jetson controls an ASUS RT-AC86U router via LAN and shares the workload of local encoding, forming an edge node. One transmitter and eight receivers collect WiFi CSI at 125\,Hz over an 80\,MHz channel using Ubilocate~\cite{Ubilocate}.
Fig.~\ref{fig:platforms}(b) shows deployment areas I and II. We mix four Jetson types to expose real edge heterogeneity, including 1$\times$ Orin Nano 4\,GB, 4$\times$ Orin Nano 8\,GB, 1$\times$ Orin NX 16\,GB, and 2$\times$ AGX Orin 64\,GB, spanning 34 to 275 TOPS. The server runs Ubuntu 22.04 on an Intel i9-14900K with an RTX 4090 GPU. The server and edge nodes are connected to the wireless Access Point. The server listens on TCP short-lived connections, which will be reset if any of its packets cannot be resolved within deadline $\mathcal{W}$ during feature uploading. EPA and clock synchronization by Network Time Protocol are guaranteed through individual ports.
In the following subsections, we evaluate FewSense as the baseline because it is more competitive than others. For fair comparison, we record the node participation of every batch during CREWS and use the same for FewSense.

\subsection{System Heterogeneity and Scalability Constraints}
\label{HSB}
To study how hardware heterogeneity and network dynamics affect global training progress, we first deploy all Jetsons in Area II and adopt a wait-for-last strategy to obtain a complete measure of training time. Fig.~\ref{fig:hardwarepdf} reports the latency distributions for each batch across four Jetson types, dividing each training time into computation, communication, and waiting. Results show that the computation time of each device type falls within a narrow range, and the peak shifts toward shorter durations as device capacity increases. Meanwhile, communication time is widely dispersed and extends to nearly 2 seconds due to the uncertain wireless channel. This suggests that synchronous full-node collection is impractical because fast devices are frequently stalled by stragglers. While stronger devices reduce local compute time, the remaining communication and synchronization overhead still leads to long-tail delays. In practice, the deadline window $W$ balances timeliness and completeness by preventing indefinite waits for the stragglers. As a result, the available set $S_t$ fluctuates across rounds, making robust inference nontrivial without a complete $\mathcal{Z}_t$.

Next, we evaluate real-world inference performance by varying the maximum number of deployed receivers. As shown in Fig.~\ref{fig:acc_param_rx}(a), both FewSense and CREWS benefits from more receivers, indicating that collaborative sensing remains valuable under non-stationary receiver participation. Notably, CREWS consistently outperforms FewSense across all receiver set sizes. With 2 deployed receivers, CREWS exceeds FewSense by 25.97 percentage points, achieving an accuracy of 86.42\%. As the receiver set expands to 8, the gain becomes smaller but remains consistent, with 98.80\% for CREWS and 91.40\% for FewSense. The result suggests that CREWS is particularly effective when timely views are sparse.

While deploying more receivers boosts accuracy, it risks severe model bloat at the aggregation server. Fig.~\ref{fig:acc_param_rx}(b) illustrates this trap where the size of FewSense balloons from 15.4K to 61.5K  as the number of receivers grow from 2 to 8, whereas CREWS maintains a constant 16.5K. This efficiency attributed to its permutation-invariant set fusion, which is structurally independent of $|S_t|$. Coupled with lightweight edge encoders, the spatial coverage of CREWS scales elegantly with less model inflation.

\begin{figure}[t]
    \centering
    \includegraphics[width=0.95\linewidth]{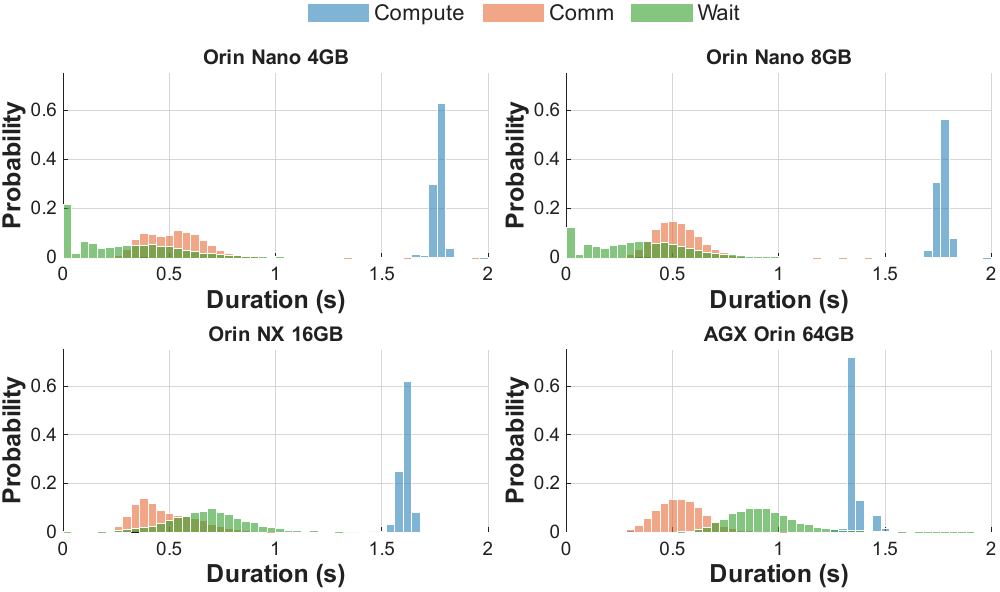}
    \caption{Histograms of per-batch computation, communication, and waiting durations across four Jetson types.}
    \label{fig:hardwarepdf}
\end{figure}

\begin{figure}[t]
    \centering
    \includegraphics[width=0.95\linewidth]{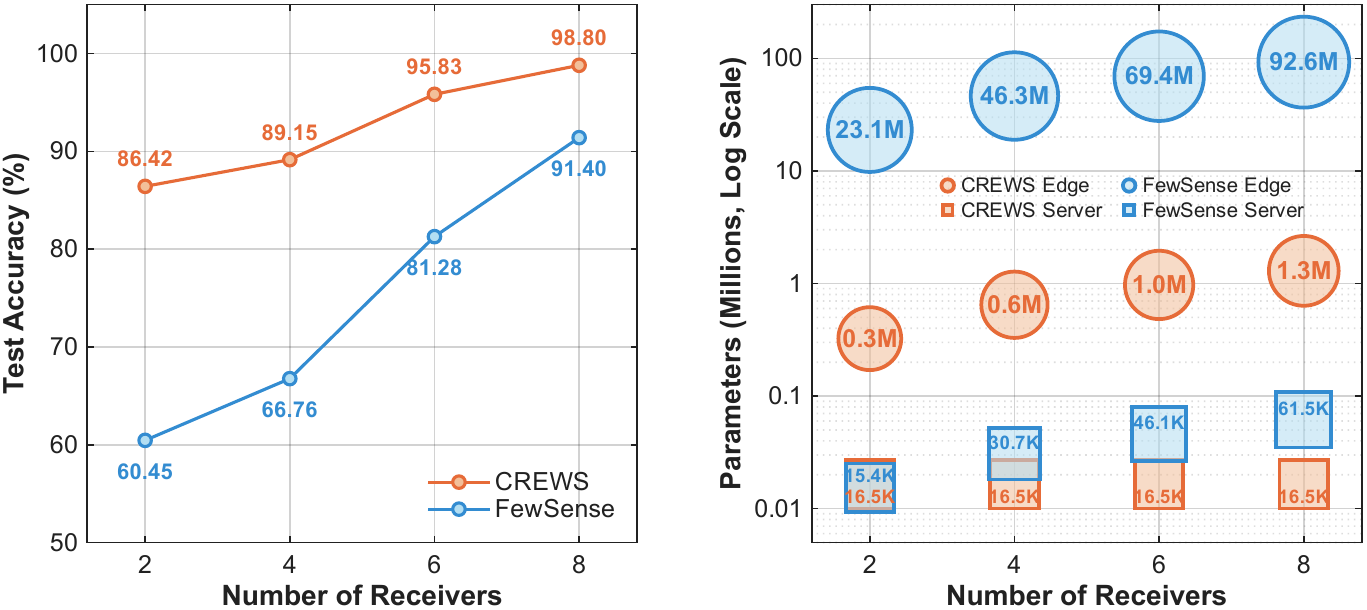}
    
    \small (a) Accuracy Comparison \qquad \qquad (b) Model Complexity
    \caption{Performance and complexity comparison under different receiver set size. (a) test accuracy of CREWS and 10-shot FewSense. (b) number of model parameters at a logarithmic scale.}
    \label{fig:acc_param_rx}
\end{figure}

\subsection{Robustness to Deployment Shifts}
\label{RDD}
We next evaluate the robustness of CREWS against mobility-induced deployment shifts. As emerging WiFi sensing applications transition from controlled laboratory environments to dynamic, in-the-wild scenarios, device relocation and mobility induce simultaneous perturbations in both link quality and runtime node participation. Fig.~\ref{fig:conf}(a) shows the class-wise performance of CREWS with all Jetsons placed in Area II, corresponding to the Favorable deployment. The accuracy of every activity exceeds 97\% on the heterogeneous Jetson testbed, indicating that the deployed framework achieves reliable recognition across all classes.

Following the physical layout in Fig.~\ref{fig:platforms}(b), we design four deployment regimes. Apart from the Favorable configuration, the Jetsons are distributed across Areas I and II in the other configurations, which introduces disparate channel conditions across the sensing nodes. In the Degraded setting, the Jetsons placed in Area I experience poorer channel conditions than those in Area II, resulting in lower transmission rate. The two swap conditions introduce more severe disruptions by exchanging the locations of Jetson pairs. In addition to the channel condition change, the DFS inputs of each Jetson are reassigned to reflect the spatial observations at its new location, which further exacerbates the distribution shift between training and deployment conditions.

As illustrated in Fig.~\ref{fig:conf}(b), the performance of CREWS degrades gracefully across all four conditions as the placement shift intensifies. In the most challenging 4-Group Swap setting, its accuracy drops only from 98.80\% to 91.24\%. However, the recognition accuracy of FewSense degrades to 17.84\% at the linear evaluation, and achieves 27.24\% with 10-shot adaptation. This result suggests that a few shots of test-time adaptation is of limited help, because the static support sets are insufficient to characterize the continuously changing test distribution in real deployment. Relocation further compounds this challenge by introducing additional changes in the active receiver subset, the arrival order, and the communication quality across inference rounds. CREWS preserves its robustness because its architecture is designed for such variation. The set-based aggregator is  invariant to arbitrary arrival permutations, while the integration of EPA and adaptive feature replay stabilize the representation space.

\begin{figure}[t]
    \centering
    \begin{minipage}{0.49\linewidth}
        \centering
        \includegraphics[width=\linewidth]{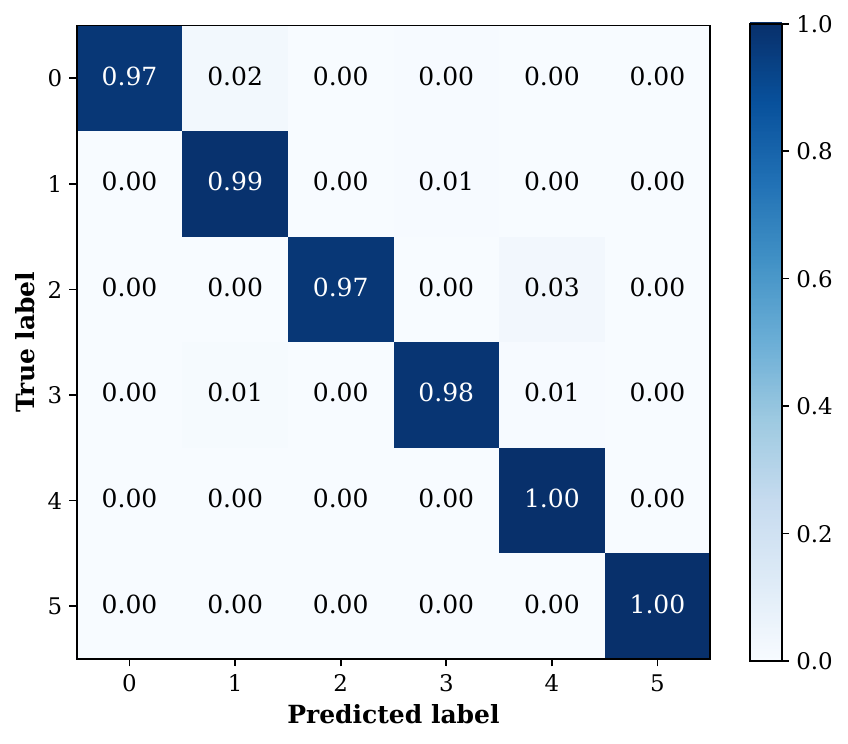}
        \small (a) Confusion Matrix
    \end{minipage}
    \hfill
    \begin{minipage}{0.49\linewidth}
        \centering
        \includegraphics[width=\linewidth]{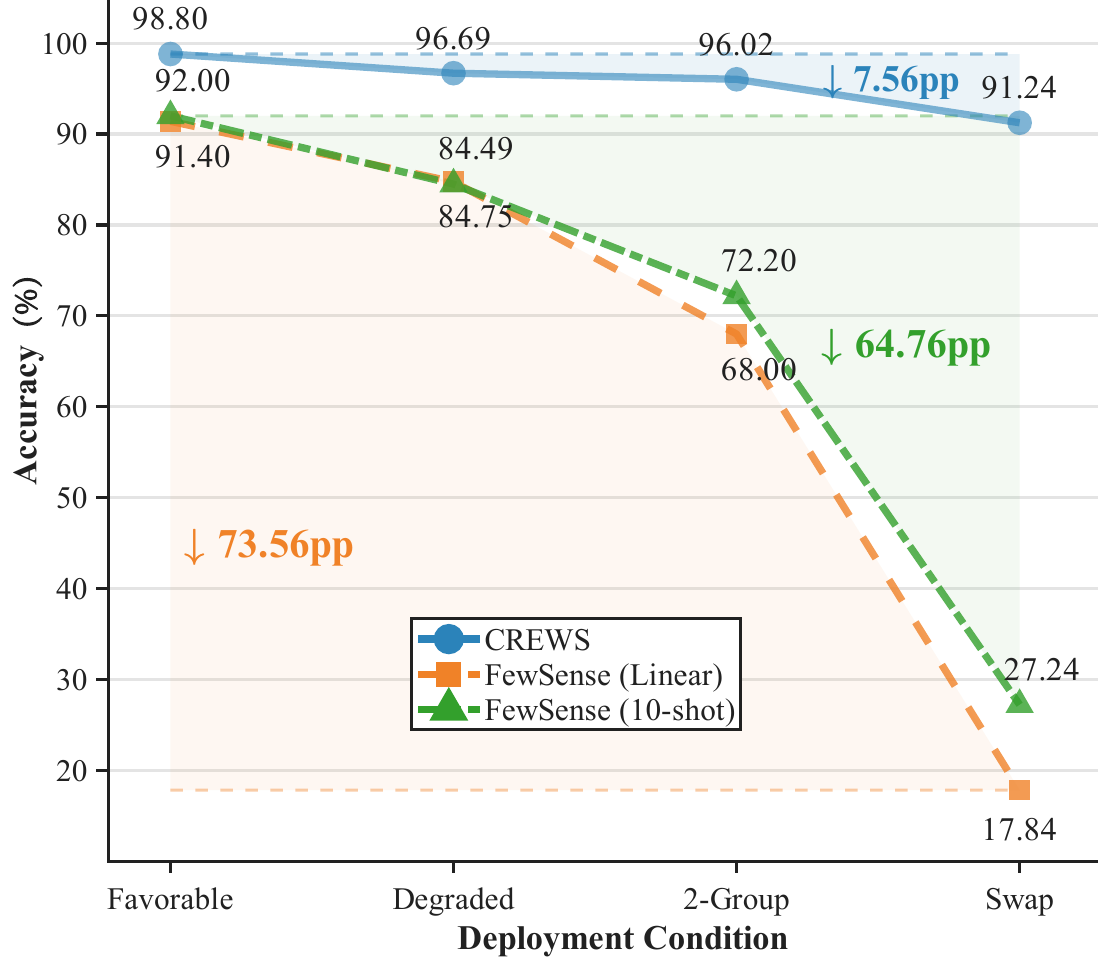}
        \small (b) Performance Degradation
    \end{minipage}
    \caption{Hardware deployment results on the heterogeneous Jetson testbed. (a) normalized confusion matrix averaged over the last 10 epochs. (b) performance degradation under varying deployment shifts.}
    \label{fig:conf}
\end{figure}

\section{Conclusion}
This paper presents CREWS, a collaborative edge WiFi sensing framework designed to natively tolerate spatial incompleteness and temporal asynchronousness in real-world deployments. Rather than treating missing or delayed features as system failures, CREWS exploits them as unique learning opportunities. With devised topology-agnostic split-learning aggregator and staleness-aware adaptive replay mechanism, the framework naturally converts unpredictably delayed representations into beneficial training regularizers. Extensive evaluations show that CREWS performs robustly to challenging edge dynamics. It achieves 98.8\% accuracy on our collected dataset and 95.1\% on Widar under Arrival Shift condition, while keeping 94.5\% and 88.4\% respectively in Straggler Reversal scenario. In the future, we will investigate adaptive model splitting for heterogeneous edge devices, extend CREWS to fully decentralized edge networks, removing the reliance on a central server.
\bibliographystyle{ACM-Reference-Format}
\bibliography{C}
\newpage

\appendix

\section{First-Order Analysis of Gradient Norm under Cardinality Variations}
\label{appendix:grad-proof}

In this section, we provide a first-order analysis to illustrate how the mean-pooling based aggregator helps stabilize the gradient magnitude when the cardinality of the available subset $|S_t|$ fluctuates.

In the CREWS aggregator, the server performs mean pooling over the embeddings of the available receiver set $S_t$:
\begin{equation}
\bar{h}_t = \frac{1}{|S_t|}\sum_{k \in S_t}\Psi_{\mathrm{emb}}\!\left(z_t^{(k)}; \psi\right) \in \mathbb{R}^d,
\end{equation}
where $\psi$ denotes the trainable parameters of the element encoder $\Psi_{\mathrm{emb}}$. The final prediction is $\hat{y}_t = \Phi_{\mathrm{pred}}(\bar{h}_t)$, and the nominal training loss is $\mathcal{L}_{\mathrm{nom}}(t) = \mathcal{L}_{\mathrm{CE}}(\hat{y}_t, y_t)$.

Applying the chain rule to compute the gradient of the loss with respect to the parameters $\psi$, we obtain:
\begin{equation}
\nabla_{\psi}\,\mathcal{L}_{\mathrm{nom}} = \left( \frac{\partial\,\bar{h}_t}{\partial\,\psi} \right)^{\top} \nabla_{\bar{h}_t}\mathcal{L}_{\mathrm{nom}} = \left( \frac{1}{|S_t|}\sum_{k \in S_t}\frac{\partial\,\Psi_{\mathrm{emb}}(z_t^{(k)})}{\partial\,\psi} \right)^{\top} \nabla_{\bar{h}_t}\mathcal{L}_{\mathrm{nom}}.
\end{equation}
Let $J_k = \frac{\partial\,\Psi_{\mathrm{emb}}(z_t^{(k)})}{\partial\,\psi} \in \mathbb{R}^{d \times |\psi|}$ denote the Jacobian matrix of the representation mapping for node $k$. The gradient can be expressed as:
\begin{equation}
\nabla_{\psi}\,\mathcal{L}_{\mathrm{nom}} = \frac{1}{|S_t|}\sum_{k \in S_t} J_k^{\top}\,\nabla_{\bar{h}_t}\mathcal{L}_{\mathrm{nom}}.
\end{equation}

Taking the norm of the gradient and applying the triangle inequality yields:
\begin{equation}
\left\|\nabla_{\psi}\,\mathcal{L}_{\mathrm{nom}}\right\| \leq \frac{1}{|S_t|}\sum_{k \in S_t}\left\|J_k^{\top}\,\nabla_{\bar{h}_t}\mathcal{L}_{\mathrm{nom}}\right\| \leq \frac{1}{|S_t|}\sum_{k \in S_t}\left\|J_k\right\|\cdot\left\|\nabla_{\bar{h}_t}\mathcal{L}_{\mathrm{nom}}\right\|.
\end{equation}
Assuming that the Jacobians of the encoders do not deviate significantly under the shared architecture (i.e., $\|J_k\| \approx \bar{J}$), the bound can be approximated as:
\begin{equation}
\left\|\nabla_{\psi}\,\mathcal{L}_{\mathrm{nom}}\right\| \approx \frac{1}{|S_t|} \cdot |S_t| \cdot \bar{J}\left\|\nabla_{\bar{h}_t}\mathcal{L}_{\mathrm{nom}}\right\| = \bar{J}\left\|\nabla_{\bar{h}_t}\mathcal{L}_{\mathrm{nom}}\right\| = \mathcal{O}(1).
\end{equation}
This indicates that the $1/|S_t|$ factor effectively mitigates the scaling effect of the summation. As a result, the expected gradient norm remains relatively stable, rather than growing proportionally to the number of active nodes.

Comparing with sum pooling, the aggregated representation would be $\bar{h}_t^{\mathrm{sum}} = \sum_{k \in S_t}\Psi_{\mathrm{emb}}(z_t^{(k)})$, and the norm bound would become:
\begin{equation}
\left\|\nabla_{\psi}\,\mathcal{L}_{\mathrm{nom}}^{\mathrm{sum}}\right\| \approx |S_t| \cdot \bar{J}\left\|\nabla_{\bar{h}_t^{\mathrm{sum}}}\mathcal{L}_{\mathrm{nom}}\right\| = \mathcal{O}(|S_t|).
\end{equation}
In this case, the gradient magnitude tends to scale with $|S_t|$. In scenarios with frequent random receiver dropouts, this scaling behavior can introduce additional variance into the training dynamics. Therefore, using mean pooling offers a more stable optimization formulation for environments with fluctuating network conditions.

\section{First-Order Analysis of Stale Representation Shift}
\label{appendix:stale-analysis}

We analyze how replacing a fresh feature with a stale cached one changes
the server loss to first order, supporting the hard-sample
interpretation.

For receiver $k$ that misses round $t$, the server only has its cached
feature from an earlier round $t'_k<t$,
$\tilde{z}^{(k)} = f_k\!\left(X_{t'_k}^{(k)};\theta_k^{(t'_k)}\right)$.
For analysis, define the virtual fresh feature
$z_{\mathrm{virt},t}^{(k)} := f_k\!\left(X_t^{(k)};\theta_k^{(t)}\right)$,
namely, the feature that receiver $k$ would have produced at round $t$
if it had been available. In the common short-delay regime, we assume
the activity label is unchanged and the observation varies smoothly over
the delay window, i.e., $y_{t'_k}=y_t$ and
$X_{t'_k}^{(k)} \approx X_t^{(k)}$. Under this approximation, the stale
shift is dominated by encoder parameter lag.

For simplicity, consider the one-step case $t'_k=t-1$, where
$\theta_k^{(t)}=\theta_k^{(t-1)}-\eta\nabla_{\theta_k}\mathcal{L}$.
Then
\begin{equation}
  \delta_t^{(k)}
  :=
  \tilde{z}^{(k)} - z_{\mathrm{virt},t}^{(k)}
  \approx
  f_k\!\left(X_t^{(k)};\theta_k^{(t-1)}\right)
  -
  f_k\!\left(X_t^{(k)};\theta_k^{(t)}\right).
\end{equation}
A first-order Taylor expansion around $\theta_k^{(t)}$ gives
$\delta_t^{(k)}\approx J_k\Delta\theta$, where
$J_k=\partial f_k/\partial\theta_k$ and
$\Delta\theta=\eta\nabla_{\theta_k}\mathcal{L}$.
Using
$\nabla_{\theta_k}\mathcal{L}=J_k^{\top}\nabla_{z_{\mathrm{virt},t}^{(k)}}\mathcal{L}$,
we obtain
\begin{equation}
  \delta_t^{(k)}
  \approx
  \eta J_kJ_k^{\top}\nabla_{z_{\mathrm{virt},t}^{(k)}}\mathcal{L}.
\end{equation}
Therefore, the first-order change in loss induced by using
$\tilde{z}^{(k)}$ in place of $z_{\mathrm{virt},t}^{(k)}$ is
\begin{equation}
  \Delta\mathcal{L}_t^{(k)}
  \approx
  \bigl\langle \nabla_{z_{\mathrm{virt},t}^{(k)}}\mathcal{L}, \delta_t^{(k)} \bigr\rangle
  =
  \eta\left\|J_k^{\top}\nabla_{z_{\mathrm{virt},t}^{(k)}}\mathcal{L}\right\|^2
  \ge 0.
  \label{eq:app-align}
\end{equation}

Eq.~\eqref{eq:app-align} shows that stale replay perturbs the server-side
input in a direction non-negatively aligned with the loss gradient,
consistent with the hard-sample view of adversarially amplified
perturbations~\cite{goodfellow2015explaining}. Thus, in the short-delay
regime, stale replay acts as a hard sample while preserving the same
label. As the age grows, temporal drift and higher-order effects
increase, which motivates the freshness decay in
Eq.~\eqref{eq:freshness}.

\section{Convergence Analysis of CREWS}
\label{appendix:convergence}

This appendix provides the complete proof of Theorem~\ref{thm:joint_convergence}. 
We first unify the notation and characterize the gradient structure, then establish 
two key lemmas that capture (i) how spatial complementarity compresses the aggregator 
bias, and (ii) how the convex combination of fresh and replay gradients controls the 
variance. Based on these, we derive convergence guarantees for the aggregator $\phi$ 
and the edge encoders $\{\theta_k\}_{k=1}^K$ separately, and finally combine them 
into the joint bound stated in Eq.~\eqref{eq:joint}. A closed-form optimal replay 
coefficient is derived in Corollary~\ref{cor:opt_lambda}.

\subsection{Notation and Gradient Structure}

Recall that the global objective $\mathcal{F}(\Theta)$ in Eq.~\eqref{eq:objective} 
depends on both the aggregator parameter $\phi$ and the $K$ edge encoder parameters 
$\{\theta_k\}$.

Since CREWS follows a split-learning design without any per-node loss, we further introduce the shorthand
\[
  F_k(\theta_k) \;:=\; \mathbb{E}\!\left[\,\mathcal{L}_{\mathrm{nom}}(\Theta) \,\big|\, k \in S_t\,\right],
\]
i.e., the nominal loss conditioned on receiver $k$'s participation, with the data and the remaining randomness marginalized out. This is a notational convenience for analyzing the gradient received by $\theta_k$, not a loss that is physically computed at the edge. We denote the joint optimal as $\mathcal{F}^* := \inf_{\Theta} \mathcal{F}(\Theta)$. Time indices are written as $\phi_t$ and $\theta_k^{(t)}$.

\textbf{Stochastic participation.} At round $t$, the availability indicator satisfies 
$\mathbb{I}[k \in S_t] \sim \mathrm{Bernoulli}(p_k)$ with 
$p_k = \Pr(a_t^{(k)} \le \mathcal{W})$. We write $\bar{p} = \tfrac{1}{K}\sum_k p_k$ 
for the average participation rate and $p_{\min} = \min_k p_k$ for the minimum 
participation rate. The replay subset $R_t \subseteq M_t := \{1,\dots,K\} \setminus S_t$ 
is drawn by independent Bernoulli sampling with probabilities $p_t(k)$ given in 
Eq.~\eqref{eq:prob}.

\textbf{Subset objectives.} For any subset $A \subseteq \{1,\dots,K\}$, let 
$\mathcal{F}_A(\phi) := \mathbb{E}_{(\mathbf{X},y) \sim \mathcal{D}}\,
\mathcal{L}_{\mathrm{CE}}\!\big(g_\phi(\{f_k(X^{(k)};\theta_k)\}_{k \in A}), y\big)$
denote the expected loss conditioned on the active subset being $A$. In particular, 
$\mathcal{F}_{S_t}$ and $\mathcal{F}_{R_t}$ are the fresh and replay subset losses.

\textbf{Gradient decomposition.} Following Eq.~\eqref{eq:total-loss}, the gradient 
of the joint loss with respect to $\phi$ decomposes as
\[
  \nabla_\phi \mathcal{L}(t) 
  = \underbrace{\nabla_\phi \mathcal{L}_{\mathrm{nom}}(t)}_{g_1^t}
  + \lambda_t\,\underbrace{\nabla_\phi \mathcal{L}_{\mathrm{rep}}(t)}_{g_2^t}.
\]
By the split-learning design (Section~\ref{sec:training-objective}), the encoder 
gradient $\nabla_{\theta_k}\mathcal{L}(t)$ flows only through $\mathcal{L}_{\mathrm{nom}}$ 
and vanishes whenever $k \notin S_t$.

For convenience we introduce the \emph{normalized gradient direction} $\hat{g}_t$ and the 
\emph{effective learning rate} $ \tilde{\eta}_t $:
\begin{equation}\label{eq:ghat}
  \hat{g}_t = \frac{1}{1+\lambda_t}\,g_1^t + \frac{\lambda_t}{1+\lambda_t}\,g_2^t,
  \qquad
  \tilde{\eta}_t = \eta_\phi(1+\lambda_t),
\end{equation}
so that the aggregator update 
$\phi_{t+1} = \phi_t - \eta_\phi(g_1^t + \lambda_t g_2^t)$ can be rewritten as 
$\phi_{t+1} = \phi_t - \tilde{\eta}_t \hat{g}_t$. The weights $1/(1+\lambda_t)$ and 
$\lambda_t/(1+\lambda_t)$ sum to one, so $\hat{g}_t$ is a convex combination
of the fresh and replay gradients. this rewriting merely redistributes the scaling 
between the learning rate and the gradient direction, without altering the actual 
parameter update.

\subsection{Auxiliary Lemmas}

Define the gradient biases of the fresh and replay subsets relative to the global 
gradient:
\[
  \Delta_1 := \nabla_\phi \mathcal{F}_{S_t}(\phi_t) - \nabla_\phi \mathcal{F}(\phi_t),
  \quad
  \bar{\Delta}_2 := \mathbb{E}_{R_t \mid S_t}\!\bigl[\nabla_\phi \mathcal{F}_{R_t}(\phi_t)\bigr]
                   - \nabla_\phi \mathcal{F}(\phi_t).
\]
By Assumption~\ref{ass:all}(ii), there exist a constant $\sigma_1^2 > 0$ and define a 
complementarity coefficient $\nu \in (0,1]$ such that 
$\mathbb{E}[\|\Delta_1\|^2] \le \sigma_1^2$, 
$\mathbb{E}[\|\bar{\Delta}_2\|^2] \le \sigma_1^2$, and 
$\mathbb{E}[\langle \Delta_1, \bar{\Delta}_2\rangle] \le -\nu\,\mathbb{E}[\|\Delta_1\|^2]$. 
The last inequality formalizes the intuition that replay compensates for the viewpoint 
bias of the fresh subset, where a larger $\nu$ means a stronger negative correlation, i.e., 
replay more effectively corrects the fresh-subset skew.

\begin{lemma}[Bias Compression of the Joint Gradient]\label{lem:bias}
Under Assumption~\ref{ass:all}(ii), the expected bias of the normalized gradient 
direction $\hat{g}_t$ satisfies
\begin{equation}\label{eq:bias_compress}
  \tilde{\sigma}_1^2
  := \mathbb{E}_{S_t}\!\left[
       \bigl\|\mathbb{E}_{\mathrm{data},R_t}[\hat{g}_t \mid S_t] - \nabla_\phi \mathcal{F}(\phi_t)\bigr\|^2
     \right]
  \le \frac{1 - 2\nu\lambda_t + \lambda_t^2}{(1+\lambda_t)^2}\,\sigma_1^2.
\end{equation}
In particular, when $\nu = 1$, the bound reduces to 
$\bigl((1-\lambda_t)/(1+\lambda_t)\bigr)^2 \sigma_1^2$. Moreover, since 
$(1+\lambda_t)^2 \ge 1$, we always have $\tilde{\sigma}_1^2 \le \sigma_1^2$.
\end{lemma}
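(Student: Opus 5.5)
The plan is to compute the conditional mean of $\hat{g}_t$ explicitly, rewrite its deviation from $\nabla_\phi\mathcal{F}(\phi_t)$ as a convex combination of the two subset biases $\Delta_1$ and $\bar{\Delta}_2$, and then expand the squared norm, using the three moment conditions attached to Assumption~\ref{ass:all}(ii) term by term.

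\textbf{Conditional mean.} Conditioning on $S_t$ and averaging over the data and the Bernoulli draw of $R_t$, the fresh gradient $g_1^t$ has mean $\nabla_\phi\mathcal{F}_{S_t}(\phi_t)$. The replay gradient $g_2^t$ has mean $\mathbb{E}_{R_t\mid S_t}[\nabla_\phi\mathcal{F}_{R_t}(\phi_t)]$.

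Here I will treat $\lambda_t$ as fixed given $S_t$, i.e.\ pull it outside the inner expectation. Strictly, $\lambda_t=\lambda_0^{\bar a_t(R_t)}$ depends on $R_t$, so I would state this as a standing simplification. The cleanest choice is to condition on the cache ages, so that $\lambda_t$ is measurable with respect to the outer expectation.

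Since the weights $1/(1+\lambda_t)$ and $\lambda_t/(1+\lambda_t)$ sum to one, subtracting $\nabla_\phi\mathcal{F}(\phi_t)$ gives
\[
\mathbb{E}[\hat g_t\mid S_t]-\nabla_\phi\mathcal{F}(\phi_t)=\frac{1}{1+\lambda_t}\bigl(\Delta_1+\lambda_t\bar\Delta_2\bigr).
\]

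\textbf{Expanding the square.} Taking the squared norm and then $\mathbb{E}_{S_t}$ gives
\[
\tilde\sigma_1^2=\frac{1}{(1+\lambda_t)^2}\Bigl(\mathbb{E}\|\Delta_1\|^2+2\lambda_t\,\mathbb{E}\langle\Delta_1,\bar\Delta_2\rangle+\lambda_t^2\,\mathbb{E}\|\bar\Delta_2\|^2\Bigr).
\]
I bound the three terms as follows:
\begin{itemize}
\item The cross term uses the complementarity condition, $\mathbb{E}\langle\Delta_1,\bar\Delta_2\rangle\le-\nu\,\mathbb{E}\|\Delta_1\|^2$, which holds because $\lambda_t\ge0$.
\item Together with the first term this yields $(1-2\nu\lambda_t)\,\mathbb{E}\|\Delta_1\|^2$.
\item The last term is bounded by $\lambda_t^2\sigma_1^2$.
\end{itemize}

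\textbf{Main obstacle.} The delicate step is bounding $(1-2\nu\lambda_t)\,\mathbb{E}\|\Delta_1\|^2$ by $(1-2\nu\lambda_t)\sigma_1^2$. This is only valid when $1-2\nu\lambda_t\ge0$. If the coefficient is negative, the term is $\le 0$, so in that case I would keep it and bound it by $0$ instead. The resulting bound, $\lambda_t^2\sigma_1^2\le(1-2\nu\lambda_t+\lambda_t^2)\sigma_1^2$, then needs $1-2\nu\lambda_t\ge0$ again, so this fallback does not rescue the case.

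So I would split into cases or simply note that $\lambda_t=\lambda_0^{\bar a}\le1$ (with $\lambda_0\le1$) and $\nu\le1$. Even then, $1-2\nu\lambda_t$ can be negative, so honestly the stated inequality may require either $2\nu\lambda_t\le1$ or the additional assumption $\mathbb{E}\|\Delta_1\|^2=\sigma_1^2$ (tight bias). I would adopt the regime $\lambda_t\le 1/(2\nu)$ explicitly, or interpret $\sigma_1^2$ as the exact second moment of $\Delta_1$. Either one closes the argument.

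\textbf{Special cases.} For $\nu=1$ the numerator becomes $(1-\lambda_t)^2$. For the final claim $\tilde\sigma_1^2\le\sigma_1^2$, I would show $1-2\nu\lambda_t+\lambda_t^2\le(1+\lambda_t)^2$. This is equivalent to $-2\nu\lambda_t\le2\lambda_t$, which holds since $\nu,\lambda_t\ge0$.
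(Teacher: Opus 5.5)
Your proposal follows the paper's proof step for step: condition on $S_t$, write the bias as $(\Delta_1+\lambda_t\bar\Delta_2)/(1+\lambda_t)$, expand the square, and apply the three moment conditions. With the hypothesis you add, it is correct.

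The obstacle you isolate is a real flaw that the paper's proof passes over. The paper replaces $(1-2\nu\lambda_t)\,\mathbb{E}\|\Delta_1\|^2$ by $(1-2\nu\lambda_t)\,\sigma_1^2$ with no sign check, even though it allows $\nu\in(0,1]$ and $\lambda_t\in[0,1]$. Your suspicion can be sharpened to a disproof: the stated inequality fails whenever $2\nu\lambda_t>1$, so an extra hypothesis is necessary, not merely convenient.
\begin{itemize}
\item \textbf{Counterexample.} Take $\Delta_1\equiv 0$ and $\mathbb{E}\|\bar\Delta_2\|^2=\sigma_1^2$. All three conditions hold for every $\nu\in(0,1]$, yet $\tilde\sigma_1^2=\lambda_t^2\sigma_1^2/(1+\lambda_t)^2$. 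This exceeds the claimed bound exactly when $2\nu\lambda_t>1$. For $\nu=\lambda_t=1$ the claimed bound is $0$ while $\tilde\sigma_1^2=\sigma_1^2/4$.
\item \textbf{Not a degenerate case.} Whenever the cross condition is tight and $\mathbb{E}\|\bar\Delta_2\|^2=\sigma_1^2$, the true value exceeds the claim by $(2\nu\lambda_t-1)(\sigma_1^2-\mathbb{E}\|\Delta_1\|^2)/(1+\lambda_t)^2$.
\item \textbf{Correct bound.} The best bound these moment conditions support is
\[ \tilde\sigma_1^2\le\frac{\max\{1-2\nu\lambda_t,\,0\}+\lambda_t^2}{(1+\lambda_t)^2}\,\sigma_1^2, \]
which agrees with the statement precisely in your regime $\lambda_t\le 1/(2\nu)$.
\end{itemize}
The regime the paper later advertises ($\nu\to1$, $\lambda_t\to1$, vanishing bias) lies inside the failure region. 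Your alternative fix, assuming $\mathbb{E}\|\Delta_1\|^2=\sigma_1^2$, is also valid. With either added hypothesis your proof goes through; without one, no proof exists.

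Two smaller points.

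First, conditioning on the cache ages does not fix $\lambda_t=\lambda_0^{\bar a_t(R_t)}$. The quantity $\bar a_t(R_t)$ still depends on which receivers the Bernoulli draw puts in $R_t$. Moreover, even an $S_t$-measurable $\lambda_t$ could not be pulled out of $\mathbb{E}_{S_t}$ to meet the unconditional moment bounds. What the computation requires is that $\lambda_t$ be a deterministic constant at round $t$, and the statement itself implicitly assumes this by writing $\lambda_t$ outside every expectation. State that as your standing simplification.

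Second, your check $1-2\nu\lambda_t+\lambda_t^2\le(1+\lambda_t)^2$ is correct. The paper's justification via $1-2\nu\lambda_t+\lambda_t^2\le 1$ is false when $\lambda_t>2\nu$. Better still, $\tilde\sigma_1^2\le\sigma_1^2$ holds with no restriction on $\nu\lambda_t$, by convexity of $\|\cdot\|^2$:
\[ \bigl\|\tfrac{1}{1+\lambda_t}\Delta_1+\tfrac{\lambda_t}{1+\lambda_t}\bar\Delta_2\bigr\|^2\le\tfrac{1}{1+\lambda_t}\|\Delta_1\|^2+\tfrac{\lambda_t}{1+\lambda_t}\|\bar\Delta_2\|^2, \]
and taking expectations gives at most $\sigma_1^2$.
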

\begin{proof}

Given the active subset $S_t$, taking the conditional expectation over the data 
distribution and the replay sampling yields
\[
  \mathbb{E}_{\mathrm{data},R_t}[\hat{g}_t \mid S_t]
  = \frac{1}{1+\lambda_t}\nabla_\phi \mathcal{F}_{S_t}(\phi_t)
  + \frac{\lambda_t}{1+\lambda_t}\,\nabla_\phi \bar{\mathcal{F}}_{R_t}(\phi_t),
\]
where $\nabla_\phi \bar{\mathcal{F}}_{R_t}(\phi_t) 
:= \mathbb{E}_{R_t \mid S_t}[\nabla_\phi \mathcal{F}_{R_t}(\phi_t)]$ denotes the 
expected replay gradient. Splitting the global gradient $\nabla_\phi \mathcal{F}(\phi_t)$
with the same convex weights and pairing the terms, we obtain the bias vector
\begin{align*}
  &\mathbb{E}[\hat{g}_t \mid S_t] - \nabla_\phi \mathcal{F}(\phi_t) \\
  =& \frac{1}{1+\lambda_t}
     \underbrace{\bigl[\nabla_\phi \mathcal{F}_{S_t}(\phi_t) - \nabla_\phi \mathcal{F}(\phi_t)\bigr]}_{\Delta_1}
  + \frac{\lambda_t}{1+\lambda_t}
     \underbrace{\bigl[\nabla_\phi \bar{\mathcal{F}}_{R_t}(\phi_t) - \nabla_\phi \mathcal{F}(\phi_t)\bigr]}_{\bar{\Delta}_2} \\
  =& \frac{\Delta_1 + \lambda_t \bar{\Delta}_2}{1+\lambda_t}.
\end{align*}

Taking expectation over $S_t \sim \mathcal{P}_{\mathrm{avail}}$, substituting
$\mathbb{E}[\|\Delta_1\|^2] \le \sigma_1^2$,
$\mathbb{E}[\|\bar{\Delta}_2\|^2] \le \sigma_1^2$, and
$\mathbb{E}[\langle \Delta_1, \bar{\Delta}_2\rangle] \le -\nu\,\mathbb{E}[\|\Delta_1\|^2]$, we obtain
$$
\begin{aligned}
  \tilde{\sigma}_1^2
  &= \frac{1}{(1+\lambda_t)^2}\,
     \mathbb{E}_{S_t}\!\bigl[\|\Delta_1 + \lambda_t \bar{\Delta}_2\|^2\bigr] \\
  &= \frac{1}{(1+\lambda_t)^2}\Bigl(
       \mathbb{E}[\|\Delta_1\|^2]
       + \lambda_t^2\,\mathbb{E}[\|\bar{\Delta}_2\|^2]
       + 2\lambda_t\,\mathbb{E}[\langle \Delta_1, \bar{\Delta}_2\rangle]\Bigr) \\
  &\le \frac{1}{(1+\lambda_t)^2}\Bigl(
       (1 - 2\nu\lambda_t)\,\mathbb{E}[\|\Delta_1\|^2]
       + \lambda_t^2\,\sigma_1^2 \Bigr) \\
  &\le \frac{1 - 2\nu\lambda_t + \lambda_t^2}{(1+\lambda_t)^2}\,\sigma_1^2,
\end{aligned}
$$
which establishes Eq.~\eqref{eq:bias_compress}. In particular, when $\nu = 1$, the coefficient reduces to $(1-\lambda_t)^2/(1+\lambda_t)^2$, recovering the ideal complementarity case. Moreover, since $(1+\lambda_t)^2 \ge 1$ and $1 - 2\nu\lambda_t + \lambda_t^2 \le 1$ whenever $\nu \ge 0$ and $\lambda_t \in [0,1]$, we obtain the uniform bound $\tilde{\sigma}_1^2 \le \sigma_1^2$, confirming that the joint gradient bias never exceeds that of the fresh subset alone.
\end{proof}

\begin{lemma}[Variance Bound of the Joint Gradient]\label{lem:var}
Under Assumption~\ref{ass:all}(i), the variance of the normalized gradient direction 
$\hat{g}_t$ satisfies
\begin{equation}\label{eq:var_bound}
  \mathrm{Var}(\hat{g}_t)
  \le \frac{2(1+\lambda_t^2)}{(1+\lambda_t)^2}\,\sigma_B^2
  \le 2\sigma_B^2.
\end{equation}
\end{lemma}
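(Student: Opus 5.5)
We use the convex-combination form of $\hat{g}_t$ from Eq.~\eqref{eq:ghat} and the elementary inequality $\|a+b\|^2\le 2\|a\|^2+2\|b\|^2$. We interpret $\mathrm{Var}(\hat{g}_t)$ as $\mathbb{E}\|\hat{g}_t-\mathbb{E}[\hat{g}_t]\|^2$, with the expectation taken over the data sampling conditioned on the current iterate (and on $S_t,R_t$, as in the bias analysis of Lemma~\ref{lem:bias}). Each of $g_1^t$ and $g_2^t$ is then a stochastic gradient whose deviation from its mean is controlled by Assumption~\ref{ass:all}(i), namely $\mathbb{E}\|g_i^t-\mathbb{E}g_i^t\|^2\le\sigma_B^2$ for $i=1,2$. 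Here we read the noise assumption as applying to both the nominal and replay stochastic gradients. If needed, we pass from deviation-from-$\nabla\mathcal{F}$ to deviation-from-mean using the fact that the variance is at most the second moment about any fixed point.

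The computation proceeds in three steps. First, write
\[
\hat{g}_t-\mathbb{E}\hat{g}_t=\tfrac{1}{1+\lambda_t}\bigl(g_1^t-\mathbb{E}g_1^t\bigr)+\tfrac{\lambda_t}{1+\lambda_t}\bigl(g_2^t-\mathbb{E}g_2^t\bigr),
\]
where $\lambda_t$ is treated as fixed given the conditioning, since it depends only on cache ages in $R_t$. Second, apply $\|a+b\|^2\le2\|a\|^2+2\|b\|^2$ and take expectations to get
\[
\mathrm{Var}(\hat{g}_t)\le \frac{2}{(1+\lambda_t)^2}\sigma_B^2+\frac{2\lambda_t^2}{(1+\lambda_t)^2}\sigma_B^2=\frac{2(1+\lambda_t^2)}{(1+\lambda_t)^2}\sigma_B^2 .
\]
This avoids any assumption on the correlation between the fresh and replay noise; they share the same label and $\phi_t$, so independence is not available. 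Third, for the second inequality, note that $1+\lambda_t^2\le(1+\lambda_t)^2$ because $\lambda_t\ge0$, which gives the bound $2\sigma_B^2$. In fact $\lambda_t=\lambda_0^{\bar a_t}\in[0,1]$ when $\lambda_0\le1$, but only nonnegativity is needed.

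The main obstacle is conceptual rather than computational: justifying that the replay gradient $g_2^t$ obeys the same noise bound $\sigma_B^2$ as the fresh one. Assumption~\ref{ass:all}(i) is stated loosely for "stochastic gradients $g$", and the stale features are generated by old encoder snapshots. We plan to handle this by explicitly reading (i) as covering every stochastic gradient of $\mathcal{L}_{\mathrm{nom}}$ and $\mathcal{L}_{\mathrm{rep}}$ with respect to $\phi$, measured about its own conditional mean. Any staleness-induced systematic shift is deterministic given the cache, so it moves into the mean and is accounted for in the bias term of Lemma~\ref{lem:bias}, not in the variance. If a sharper constant were desired under conditional independence of the two noises, the factor $2$ could be dropped, but we will keep the cruder bound because it holds without that assumption.
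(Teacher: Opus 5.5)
Your proposal is correct and follows the paper's proof essentially step for step. Both write $\hat{g}_t$ as the convex combination with weights $1/(1+\lambda_t)$ and $\lambda_t/(1+\lambda_t)$, apply $\|aX+bY\|^2\le 2a^2\|X\|^2+2b^2\|Y\|^2$, bound each of $\mathrm{Var}(g_1^t)$ and $\mathrm{Var}(g_2^t)$ by $\sigma_B^2$ via Assumption~\ref{ass:all}(i), and conclude with $1+\lambda_t^2\le(1+\lambda_t)^2$. Your extra care in conditioning so that $\lambda_t$ is fixed, and in reading Assumption~\ref{ass:all}(i) as covering the replay gradient about its own conditional mean, makes explicit what the paper leaves implicit.
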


\begin{proof}
Let $\alpha = 1/(1+\lambda_t)$, $\beta = \lambda_t/(1+\lambda_t)$. By Young's inequality, 
for any random vectors $X, Y$ and scalars $a, b$ we have 
$\|aX + bY\|^2 \le 2a^2\|X\|^2 + 2b^2\|Y\|^2$. Taking expectation and applying 
Assumption~\ref{ass:all}(i):
\[
  \mathrm{Var}(\hat{g}_t)
  \le 2\alpha^2\,\mathrm{Var}(g_1^t) + 2\beta^2\,\mathrm{Var}(g_2^t)
  \le 2(\alpha^2 + \beta^2)\,\sigma_B^2
  = \frac{2(1+\lambda_t^2)}{(1+\lambda_t)^2}\,\sigma_B^2.
\]
Finally, $1 + \lambda_t^2 \le (1+\lambda_t)^2$ for $\lambda_t \ge 0$ yields 
$\mathrm{Var}(\hat{g}_t) \le 2\sigma_B^2$.
\end{proof}

\subsection{Convergence of the Aggregator}

\begin{theorem}[Aggregator Convergence]\label{thm:aggregator}
Under Assumption~\ref{ass:all}, with $\eta_\phi \le 1/(2L_1)$, after $T$ rounds 
of training,
we have
\begin{equation}\label{eq:thm1}
\begin{split}
   & \frac{1}{T}\sum_{t=0}^{T-1}\mathbb{E}\bigl[\|\nabla_\phi \mathcal{F}(\phi_t)\|^2\bigr]\\
  &\le \frac{2(\mathcal{F}(\phi_0) - \mathcal{F}^*)}{T\eta_\phi}
  + \overline{(1 - 2\nu\lambda_t + \lambda_t^2)}\,\sigma_1^2
  + 2 L_1 \eta_\phi\,\overline{(1 + \lambda_t^2)}\,\sigma_B^2,
\end{split}
\end{equation}
where $\overline{(\cdot)} = \tfrac{1}{T}\sum_{t=0}^{T-1}(\cdot)$ denotes the 
$T$-round time average.
\end{theorem}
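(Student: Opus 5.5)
The argument is the standard one-step descent analysis for biased SGD on an $L_1$-smooth objective, applied to the rewritten update $\phi_{t+1}=\phi_t-\tilde{\eta}_t\hat{g}_t$ from Eq.~\eqref{eq:ghat}. Lemma~\ref{lem:bias} and Lemma~\ref{lem:var} supply the bias and variance terms.

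\textbf{Descent step.} By $L_1$-smoothness of $\mathcal{F}$ in $\phi$,
\[
\mathcal{F}(\phi_{t+1})\le \mathcal{F}(\phi_t)-\tilde{\eta}_t\langle\nabla_\phi\mathcal{F}(\phi_t),\hat{g}_t\rangle+\tfrac{L_1\tilde{\eta}_t^2}{2}\|\hat{g}_t\|^2 .
\]
Condition on $\phi_t$ and $S_t$, and write $m_t:=\mathbb{E}[\hat{g}_t\mid S_t]$ and $b_t:=m_t-\nabla_\phi\mathcal{F}(\phi_t)$. The inner product is handled with the polarization identity
\[
-\langle a,m\rangle=\tfrac12\bigl(\|m-a\|^2-\|a\|^2-\|m\|^2\bigr),
\]
which gives $-\tfrac{\tilde\eta_t}{2}\|\nabla_\phi\mathcal{F}\|^2+\tfrac{\tilde\eta_t}{2}\|b_t\|^2-\tfrac{\tilde\eta_t}{2}\|m_t\|^2$. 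The quadratic term splits as $\mathbb{E}\|\hat{g}_t\|^2=\|m_t\|^2+\mathrm{Var}(\hat{g}_t)$. Under $\eta_\phi\le 1/(2L_1)$ I want $L_1\tilde\eta_t\le 1$, so that $\tfrac{L_1\tilde\eta_t^2}{2}\|m_t\|^2$ is absorbed by $-\tfrac{\tilde\eta_t}{2}\|m_t\|^2$. This needs $\lambda_t\le 1$, which holds if $\lambda_0\le1$ since $\lambda_t=\lambda_0^{\bar a_t}$; I will state that as part of the regime.

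\textbf{Per-round inequality and telescoping.} Taking total expectation and applying the two lemmas gives
\[
\tfrac{\tilde\eta_t}{2}\mathbb{E}\|\nabla_\phi\mathcal{F}(\phi_t)\|^2\le \mathbb{E}[\mathcal{F}(\phi_t)-\mathcal{F}(\phi_{t+1})]+\tfrac{\tilde\eta_t}{2}\tfrac{1-2\nu\lambda_t+\lambda_t^2}{(1+\lambda_t)^2}\sigma_1^2+\tfrac{L_1\tilde\eta_t^2}{2}\tfrac{2(1+\lambda_t^2)}{(1+\lambda_t)^2}\sigma_B^2 .
\]
Substituting $\tilde\eta_t=\eta_\phi(1+\lambda_t)$ cancels the denominators. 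The bias term becomes $\tfrac{\eta_\phi}{2}\tfrac{1-2\nu\lambda_t+\lambda_t^2}{1+\lambda_t}\sigma_1^2\le \tfrac{\eta_\phi}{2}(1-2\nu\lambda_t+\lambda_t^2)\sigma_1^2$, and the variance term is $L_1\eta_\phi^2(1+\lambda_t^2)\sigma_B^2$. On the left, use $\tilde\eta_t\ge\eta_\phi$, valid because $\lambda_t\ge0$ and $\lambda_t\le 1$ keeps $1-2\nu\lambda_t+\lambda_t^2\ge0$. Dividing by $\eta_\phi/2$, summing over $t=0,\dots,T-1$, telescoping with $\mathcal{F}(\phi_T)\ge\mathcal{F}^*$, and dividing by $T$ yields exactly the three terms of Eq.~\eqref{eq:thm1}: $2(\mathcal{F}(\phi_0)-\mathcal{F}^*)/(T\eta_\phi)$, $\overline{(1-2\nu\lambda_t+\lambda_t^2)}\sigma_1^2$, and $2L_1\eta_\phi\overline{(1+\lambda_t^2)}\sigma_B^2$.

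\textbf{Main obstacle.} The difficulty is the bookkeeping around conditioning. Lemma~\ref{lem:bias} bounds the bias only in expectation over $S_t$, with $\lambda_t$ treated as fixed, but $\lambda_t$ actually depends on $R_t$ through $\bar a_t(R_t)$. I would handle this by treating $\lambda_t$ as measurable with respect to the past and $S_t$, replacing $\bar a_t(R_t)$ by its conditional value, or by bounding with worst-case ages. The split $\mathbb{E}\|\hat g_t\|^2=\|m_t\|^2+\mathrm{Var}$ must use the conditional variance given $S_t$, which is what Lemma~\ref{lem:var} controls via $\sigma_B^2$. A further step is needed because $\mathcal{F}$ also depends on the $\theta_k$, which move during training. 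I would analyze the $\phi$-block with the $\theta_k$ frozen at their round-$t$ values, so the telescoping sum is in $\mathcal{F}(\phi_t)$ along the aggregator coordinate. Cross-block coupling is deferred to the joint theorem.
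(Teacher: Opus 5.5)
Your proposal is correct and follows the paper's proof essentially step for step: the descent lemma on the rescaled update $\phi_{t+1}=\phi_t-\tilde\eta_t\hat g_t$, polarization with Lemma~\ref{lem:bias} for the inner product, the mean/variance split with Lemma~\ref{lem:var} for the quadratic term, dropping the $\|\mathbb{E}[\hat g_t]\|^2$ term via $\tilde\eta_t\le 2\eta_\phi\le 1/L_1$ (which uses $\lambda_t\in[0,1]$), cancelling the $(1+\lambda_t)$ factors, using $\tilde\eta_t\ge\eta_\phi$ on the left, and telescoping. The paper also suppresses the two conditioning issues you raise, namely $\lambda_t$ depending on $R_t$ and $\mathcal{F}$ depending on the moving $\theta_k$: it treats $\lambda_t$ as fixed and $\mathcal{F}$ as a function of $\phi$ alone. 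Note that freezing the $\theta_k$ at their round-$t$ values does not by itself make the sum telescope, so the consistent reading is to hold them fixed across all $T$ rounds, as the paper implicitly does.
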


\begin{proof}

By the $L_1$-smoothness in 
Assumption~\ref{ass:all}(i), for the update $\phi_{t+1} = \phi_t - \tilde{\eta}_t \hat{g}_t$, 
taking the total expectation yields
\begin{equation}\label{eq:P1}
  \mathbb{E}[\mathcal{F}(\phi_{t+1})]
  \le \mathcal{F}(\phi_t)
      - \tilde{\eta}_t\bigl\langle \nabla_\phi \mathcal{F}(\phi_t), \mathbb{E}[\hat{g}_t]\bigr\rangle
      + \frac{L_1 \tilde{\eta}_t^2}{2}\,\mathbb{E}[\|\hat{g}_t\|^2].
\end{equation}

By the polarization identity 
$\langle a, b\rangle = \tfrac{1}{2}\|a\|^2 + \tfrac{1}{2}\|b\|^2 - \tfrac{1}{2}\|a-b\|^2$, 
letting $a = \nabla_\phi \mathcal{F}(\phi_t)$ and $b = \mathbb{E}[\hat{g}_t]$, and 
invoking Lemma~\ref{lem:bias}:
\begin{equation}\label{eq:P2}
  -\tilde{\eta}_t \langle \nabla_\phi \mathcal{F}, \mathbb{E}[\hat{g}_t]\rangle
  \le -\frac{\tilde{\eta}_t}{2}\|\nabla_\phi \mathcal{F}\|^2
    -\frac{\tilde{\eta}_t}{2}\|\mathbb{E}[\hat{g}_t]\|^2
    +\frac{\tilde{\eta}_t}{2}\,\tilde{\sigma}_1^2.
\end{equation}

Using the identity 
$\mathbb{E}[\|X\|^2] = \|\mathbb{E}[X]\|^2 + \mathrm{Var}(X)$ together with 
Lemma~\ref{lem:var}:
\begin{equation}\label{eq:P3}
  \mathbb{E}[\|\hat{g}_t\|^2]
  = \|\mathbb{E}[\hat{g}_t]\|^2 + \mathrm{Var}(\hat{g}_t)
  \le \|\mathbb{E}[\hat{g}_t]\|^2 + \frac{2(1 + \lambda_t^2)}{(1+\lambda_t)^2}\,\sigma_B^2.
\end{equation}

Substituting~\eqref{eq:P2} and~\eqref{eq:P3} into~\eqref{eq:P1}, the net coefficient 
of $\|\mathbb{E}[\hat{g}_t]\|^2$ is $\tfrac{\tilde{\eta}_t}{2}(L_1 \tilde{\eta}_t - 1)$. 
Since $\eta_\phi \le 1/(2L_1)$ and $\lambda_t \in [0,1]$, we have 
$\tilde{\eta}_t \le 2\eta_\phi \le 1/L_1$, so this coefficient is non-positive 
and can be dropped. For the bias contribution, using 
$\tilde{\eta}_t = \eta_\phi(1+\lambda_t)$ together with Lemma~\ref{lem:bias} gives
\[
  \frac{\tilde{\eta}_t}{2}\,\tilde{\sigma}_1^2
  \le \frac{\eta_\phi}{2}\cdot\frac{1 - 2\nu\lambda_t + \lambda_t^2}{1+\lambda_t}\,\sigma_1^2
  \le \frac{\eta_\phi}{2}\,(1 - 2\nu\lambda_t + \lambda_t^2)\,\sigma_1^2,
\]
 For the variance 
contribution, $L_1 \tilde{\eta}_t^2 / (1+\lambda_t)^2 = L_1 \eta_\phi^2$. 

Combining these, we obtain
\begin{equation}\label{eq:P4}
\begin{split}
  \mathbb{E}[\mathcal{F}(\phi_{t+1})]
  &\le \mathcal{F}(\phi_t)
      - \frac{\eta_\phi}{2}\|\nabla_\phi \mathcal{F}(\phi_t)\|^2\\
  &\quad + \frac{\eta_\phi}{2}(1 - 2\nu\lambda_t + \lambda_t^2)\,\sigma_1^2
      + L_1 \eta_\phi^2\,(1 + \lambda_t^2)\,\sigma_B^2.
\end{split}
\end{equation}

Summing~\eqref{eq:P4} over 
$t = 0, \dots, T-1$, using the telescoping property together with 
$\mathcal{F}(\phi_T) \ge \mathcal{F}^*$, and dividing both sides by $T\eta_\phi/2$ 
gives Eq.~\eqref{eq:thm1}.
\end{proof}

\subsection{Convergence of the Encoders}

The update rule for encoder $\theta_k$ is
\[
  \theta_k^{(t+1)}
  = \theta_k^{(t)} - \eta_\theta\,\mathbb{I}[k \in S_t]\,\nabla_{\theta_k}\mathcal{L}_{\mathrm{nom}}(t).
\]
Receiver dropouts make each round update $\theta_k$ with probability $p_k$ and skip 
with probability $1 - p_k$.

\begin{theorem}[Encoder Convergence]\label{thm:encoder}
Under Assumption~\ref{ass:all}, with $\eta_\theta \le 1/(4L_2)$,
\begin{equation}\label{eq:thm2}
\begin{split}
  &\frac{1}{T}\sum_{t=0}^{T-1}\frac{1}{K}\sum_{k=1}^{K}
  \mathbb{E}\bigl[\|\nabla_{\theta_k} \mathcal{F}(\theta_k^{(t)})\|^2\bigr]\\
  &\le \frac{4(\mathcal{F}(\Theta^0) - \mathcal{F}^*)}{T\eta_\theta p_{\min}}
  + \frac{\bar{p}}{p_{\min}}\Bigl[
      2\sigma_2^2 + 2 L_2 \eta_\theta (\sigma_B^2 + 2\sigma_2^2)
    \Bigr].    
\end{split}
\end{equation}
\end{theorem}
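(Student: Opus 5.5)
The plan is to run the same descent-lemma argument used for Theorem~\ref{thm:aggregator}, now applied to each encoder block $\theta_k$ and made to handle the random gating $\mathbb{I}[k\in S_t]$. Fix $k$ and write the update as $\theta_k^{(t+1)}=\theta_k^{(t)}-\eta_\theta G_k^t$, where $G_k^t:=\mathbb{I}[k\in S_t]\,\nabla_{\theta_k}\mathcal{L}_{\mathrm{nom}}(t)$.

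\textbf{Step 1: descent inequality and the first moment.} By $L_2$-smoothness of $\mathcal{F}$ in $\theta_k$ (Assumption~\ref{ass:all}(i)),
\[
\mathbb{E}[\mathcal{F}(\theta_k^{(t+1)})]\le\mathcal{F}(\theta_k^{(t)})-\eta_\theta\langle\nabla_{\theta_k}\mathcal{F},\mathbb{E}G_k^t\rangle+\tfrac{L_2\eta_\theta^2}{2}\mathbb{E}\|G_k^t\|^2 .
\]
Conditioning on participation gives $\mathbb{E}G_k^t=p_k\,m_k^t$, where $m_k^t:=\mathbb{E}[\nabla_{\theta_k}\mathcal{L}_{\mathrm{nom}}\mid k\in S_t]=\nabla F_k$. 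Write $m_k^t=\nabla_{\theta_k}\mathcal{F}+b_k^t$. Assumption~\ref{ass:all}(iii) gives $\mathbb{E}\|b_k^t\|^2\le\sigma_2^2$.

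\textbf{Step 2: the inner-product term.} For the inner product I will not use the polarization identity. Instead I use Young's inequality $\langle a,b_k\rangle\ge-\tfrac12\|a\|^2-\tfrac12\|b_k\|^2$, which gives
\[
-\eta_\theta p_k\langle\nabla\mathcal{F},m_k\rangle\le-\tfrac{\eta_\theta p_k}{2}\|\nabla\mathcal{F}\|^2+\tfrac{\eta_\theta p_k}{2}\sigma_2^2 .
\]

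\textbf{Step 3: the second moment.} Here $\mathbb{E}\|G_k^t\|^2=p_k\,\mathbb{E}[\|\nabla_{\theta_k}\mathcal{L}_{\mathrm{nom}}\|^2\mid k\in S_t]$. Splitting into noise, participation bias and true gradient gives $\le p_k\bigl(\sigma_B^2+2\sigma_2^2+2\|\nabla\mathcal{F}\|^2\bigr)$, up to the constant I need to match $(\sigma_B^2+2\sigma_2^2)$. The resulting gradient term $L_2\eta_\theta^2p_k\|\nabla\mathcal{F}\|^2$ is absorbed using $\eta_\theta\le1/(4L_2)$, which leaves a net coefficient of at most $-\tfrac{\eta_\theta p_k}{4}\|\nabla\mathcal{F}\|^2$. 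This absorption is exactly what produces the constant $4$ in the first term of the bound.

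\textbf{Step 4: summing and normalizing.} Summing over $k$ (all blocks act on the same $\mathcal{F}$), telescoping over $t$, and using $\mathcal{F}\ge\mathcal{F}^*$ yields
\[
\tfrac{\eta_\theta}{4}\sum_{t}\sum_k p_k\,\mathbb{E}\|\nabla_{\theta_k}\mathcal{F}\|^2\le\mathcal{F}(\Theta^0)-\mathcal{F}^*+T\sum_k p_k\bigl[\tfrac{\eta_\theta}{2}\sigma_2^2+\tfrac{L_2\eta_\theta^2}{2}(\sigma_B^2+2\sigma_2^2)\bigr].
\]
Lower-bounding $p_k\ge p_{\min}$ on the left, writing $\sum_kp_k=K\bar p$ on the right, and dividing by $T K\eta_\theta p_{\min}/4$ gives a bound of the stated form, with the ratio $\bar p/p_{\min}$ multiplying the heterogeneity and variance terms.

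\textbf{Main obstacle.} The hard part is the bookkeeping of constants. To reach exactly $2\sigma_2^2$ and $2L_2\eta_\theta(\sigma_B^2+2\sigma_2^2)$, I expect to loosen the factors $\tfrac12$ generously; the theorem only requires an upper bound, so loose factors are acceptable.

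A subtler issue is treating each block separately while the other blocks and $\phi$ also move within a round. I would handle this block-coordinate issue by interpreting $\mathcal{F}(\theta_k)$ as the objective along the $\theta_k$ coordinate with the other parameters frozen, as the notation $\mathcal{F}(\theta_k^{(t)})$ in the statement suggests, and then telescoping the joint $\mathcal{F}(\Theta^t)$. If that fails, I would instead invoke blockwise smoothness of the joint objective.
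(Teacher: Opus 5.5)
Your proposal is correct and follows essentially the same route as the paper. Both use a blockwise descent step gated by $\Pr(k\in S_t)=p_k$ and the bound $\langle\nabla_{\theta_k}\mathcal{F},\nabla F_k\rangle\ge\tfrac12\|\nabla_{\theta_k}\mathcal{F}\|^2-\tfrac12\sigma_2^2$; your Young's-inequality step gives exactly what the paper gets from polarization. Both then use the second-moment bound $\sigma_B^2+2\sigma_2^2+2\|\nabla_{\theta_k}\mathcal{F}\|^2$, absorption via $\eta_\theta\le 1/(4L_2)$, and finally $p_k\ge p_{\min}$ and $\sum_k p_k=K\bar p$.

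The constants land exactly on $2\sigma_2^2+2L_2\eta_\theta(\sigma_B^2+2\sigma_2^2)$, so no loosening is needed. The block-coordinate issue you flag (the other blocks and $\phi$ moving while you telescope) is also glossed over in the paper, which telescopes $F_k(\theta_k^{(t)})$ per block as if only $\theta_k$ moved.
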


\begin{proof}

By $L_2$-smoothness and the conditional 
independence between the participation indicator and the current parameters, taking 
expectation over the stochastic participation gives
\begin{equation}\label{s1}
\begin{split}
  \mathbb{E}[F_k(\theta_k^{(t+1)})]
  &\le F_k(\theta_k^{(t)})
       - \eta_\theta p_k\,\bigl\langle \nabla_{\theta_k}\mathcal{F},\,
           \mathbb{E}[\nabla_{\theta_k}\mathcal{L}_{\mathrm{nom}} \mid k \in S_t]\bigr\rangle\\
  & + \frac{L_2 \eta_\theta^2 p_k}{2}\,
          \mathbb{E}[\|\nabla_{\theta_k}\mathcal{L}_{\mathrm{nom}}\|^2 \mid k \in S_t].
\end{split}
\end{equation}

Since $\mathbb{E}[\nabla_{\theta_k}\mathcal{L}_{\mathrm{nom}} \mid k \in S_t] = \nabla_{\theta_k} F_k(\theta_k^{(t)})$ by the definition of $F_k$, combining the polarization identity with 
Assumption~\ref{ass:all}(iii) yields
\begin{equation}\label{s2}
      \langle \nabla_{\theta_k}\mathcal{F}, \nabla_{\theta_k} F_k\rangle
  \ge \tfrac{1}{2}\|\nabla_{\theta_k}\mathcal{F}\|^2 - \tfrac{1}{2}\sigma_2^2.
\end{equation}

By Assumption~\ref{ass:all}(i),
\[
  \mathbb{E}[\|\nabla_{\theta_k}\mathcal{L}_{\mathrm{nom}}\|^2 \mid k \in S_t]
  \le \sigma_B^2 + \|\nabla_{\theta_k} F_k\|^2.
\]
Using $(a+b)^2 \le 2a^2 + 2b^2$ together with Assumption~\ref{ass:all}(iii) gives 
$\|\nabla_{\theta_k} F_k\|^2 \le 2\|\nabla_{\theta_k}\mathcal{F}\|^2 + 2\sigma_2^2$, 
and therefore
\begin{equation}\label{s3}
      \mathbb{E}[\|\nabla_{\theta_k}\mathcal{L}_{\mathrm{nom}}\|^2 \mid k \in S_t]
  \le \sigma_B^2 + 2\|\nabla_{\theta_k}\mathcal{F}\|^2 + 2\sigma_2^2.
\end{equation}

Substituting~\eqref{s2} and~\eqref{s3} into~\eqref{s1}, the coefficient of $\|\nabla_{\theta_k}\mathcal{F}\|^2$ becomes 
$\eta_\theta p_k(-\tfrac{1}{2} + L_2 \eta_\theta)$. Since 
$\eta_\theta \le 1/(4L_2)$, this coefficient is at most $-\eta_\theta p_k/4$, 
giving
\begin{equation}\label{eq:S1}
\begin{split}
  \mathbb{E}[F_k(\theta_k^{(t+1)})]
  &\le F_k(\theta_k^{(t)})
      - \frac{\eta_\theta p_k}{4}\|\nabla_{\theta_k}\mathcal{F}\|^2\\
  &\quad + \frac{\eta_\theta p_k}{2}\sigma_2^2
      + \frac{L_2 \eta_\theta^2 p_k}{2}(\sigma_B^2 + 2\sigma_2^2).
\end{split}
\end{equation}

Averaging~\eqref{eq:S1} over all $k$ and using $p_k \ge p_{\min}$, we obtain
\[
  \frac{\eta_\theta p_{\min}}{4K}\sum_{k=1}^{K}\|\nabla_{\theta_k}\mathcal{F}\|^2
  \le \frac{1}{K}\sum_{k=1}^{K}\bigl[F_k(\theta_k^{(t)}) 
      - \mathbb{E} F_k(\theta_k^{(t+1)})\bigr] 
      + \frac{\eta_\theta \bar{p}}{2}\,C_t,
\]
where $C_t := \sigma_2^2 + L_2 \eta_\theta(\sigma_B^2 + 2\sigma_2^2)$. 

Summing 
over $t = 0, \dots, T-1$ and noting that $F_k(\theta_k^{(T)}) \ge \mathcal{F}^*$ for every $k$ since $F_k$ is a conditional expectation of $\mathcal{L}_{\mathrm{nom}}$ and shares the same infimum lower bound, dividing both sides by $T\eta_\theta p_{\min}/4$ yields Eq.~\eqref{eq:thm2}.
\end{proof}

\subsection{Proof of the Joint Convergence Theorem}

Combining Theorem~\ref{thm:aggregator} and Theorem~\ref{thm:encoder} by adding 
Eq.~\eqref{eq:thm1} and Eq.~\eqref{eq:thm2} yields the joint bound stated in 
Eq.~\eqref{eq:joint}. 
\begin{equation}\label{eq:rejoint}
\begin{split}
  &\frac{1}{T}\sum_{t=0}^{T-1}
   \mathbb{E}\!\left[\left\|\nabla_\phi \mathcal{F}(\phi_t)\right\|^2\right]
  +\frac{1}{T}\sum_{t=0}^{T-1}\frac{1}{K}\sum_{k=1}^{K}
   \mathbb{E}\!\left[\left\|\nabla_{\theta_k} \mathcal{F}(\theta_k^{(t)})\right\|^2\right] \\[4pt]
  &\leq
  \underbrace{\frac{2\bigl(\mathcal{F}(\phi_0)-\mathcal{F}^*\bigr)}{T\eta_\phi}}_{\text{aggregator initial gap}}
  +\underbrace{\overline{(1-2\nu\lambda_t+\lambda_t^2)}\,\sigma_1^2}_{\text{aggregator bias}}
  +\underbrace{2 L_1 \eta_\phi\,\overline{(1+\lambda_t^2)}\,\sigma_B^2}_{\text{aggregator variance}} \\[4pt]
  &\quad
  +\underbrace{\frac{4\bigl(\mathcal{F}(\Theta^0)-\mathcal{F}^*\bigr)}{T\eta_\theta\, p_{\min}}}_{\text{encoder initial gap}}
  +\frac{\bar{p}}{p_{\min}}
  \left[
    \underbrace{2\sigma_2^2}_{\text{encoder heterogeneity}}
    +\underbrace{2 L_2 \eta_\theta\bigl(\sigma_B^2+2\sigma_2^2\bigr)}_{\text{encoder noise}}
  \right].
\end{split}
\end{equation}

The meaning of each term is as follows.

\begin{itemize}
  \item \textbf{Convergence rate term} 
    $\mathcal{O}\bigl(1/(\eta_\phi T) + 1/(\eta_\theta p_{\min} T)\bigr)$: 
    decays at rate $\mathcal{O}(1/T)$, where receiver dropouts merely slow the 
    encoder convergence by a factor $1/p_{\min}$ without altering the asymptotic 
    behavior. As long as $p_{\min} > 0$ (i.e., every receiver occasionally 
    participates), the system still converges to a stationary point.

  \item \textbf{Aggregator bias term} 
    $\overline{(1 - 2\nu\lambda_t + \lambda_t^2)}\,\sigma_1^2$: 
    determined by how well replay compensates the fresh-subset bias. When 
    complementarity is strong and caches are fresh ($\nu \to 1$, $\lambda_t \to 1$), 
    the coefficient $(1 - 2\nu\lambda_t + \lambda_t^2) \to (1 - \lambda_t)^2 \to 0$ 
    and the bias vanishes. The staleness-driven decay 
    $\lambda_t = \lambda_0^{\bar{a}_t(R_t)}$ in CREWS naturally realizes this 
    adaptive behavior: keeping $\lambda_t$ large when caches are fresh and 
    shrinking it toward zero as caches age.

  \item \textbf{Encoder heterogeneity term} $\tfrac{\bar{p}}{p_{\min}}\sigma_2^2$: 
    reflects the intrinsic floor imposed by distribution discrepancies across 
    multi-view CSI streams, independent of the dropout rate. Elastic parameter 
    alignment (EPA, Eq.~\eqref{eq:elastic_alignment}) periodically synchronizes 
    local encoders toward the shared reference $\bar{\theta}_{\mathrm{global}}$, 
    acting as a proximal regularizer that explicitly upper-bounds $\sigma_2^2$ 
    and prevents long-term drift of rarely updated encoders. 

  \item \textbf{System variance term} 
    $\eta_\phi\,\overline{(1 + \lambda_t^2)}\,\sigma_B^2 
    + \eta_\theta\tfrac{\bar{p}}{p_{\min}}(\sigma_B^2 + 2\sigma_2^2)$: 
    vanishes as $\eta_\phi, \eta_\theta \to 0$. The additional variance  $\overline{\lambda_t^2}\,\sigma_B^2$ is introduced 
    by replay.
    adopting a decaying learning rate (e.g., $\eta_\theta^t = \eta_0/\sqrt{t}$) 
    ensures asymptotic convergence.
\end{itemize}

\subsection{Derivation of the Optimal Replay Coefficient}

\begin{corollary}[Optimal Replay Coefficient]\label{cor:opt_lambda}
Fix $T$ and let $\lambda_t \equiv \lambda$ be a constant. The bias--variance 
trade-off function
\begin{equation}\label{eq:BV}
  \mathcal{B}(\lambda)
  = \underbrace{(1 - 2\nu\lambda + \lambda^2)\,\sigma_1^2}_{\text{bias term}}
  + \underbrace{2 L_1 \eta_\phi (1 + \lambda^2)\,\sigma_B^2}_{\text{variance term}},
\end{equation}
is strictly convex in $\lambda$ and admits the unique minimizer
\begin{equation}\label{eq:opt_lambda_app}
  \lambda^* = \frac{\nu\,\sigma_1^2}{\sigma_1^2 + 2 L_1 \eta_\phi\,\sigma_B^2}.
\end{equation}
\end{corollary}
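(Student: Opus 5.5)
The plan is to treat $\mathcal{B}(\lambda)$ in Eq.~\eqref{eq:BV} as a univariate quadratic and minimize it directly. Expanding and collecting powers of $\lambda$ gives
\begin{equation*}
  \mathcal{B}(\lambda)
  = \bigl(\sigma_1^2 + 2L_1\eta_\phi\sigma_B^2\bigr)\lambda^2
  - 2\nu\sigma_1^2\,\lambda
  + \bigl(\sigma_1^2 + 2L_1\eta_\phi\sigma_B^2\bigr).
\end{equation*}
The constant term plays no role in the minimizer. The leading coefficient $A := \sigma_1^2 + 2L_1\eta_\phi\sigma_B^2$ is strictly positive because $\sigma_1^2>0$ by Assumption~\ref{ass:all}(ii), as restated before Lemma~\ref{lem:bias}, and because $L_1,\eta_\phi,\sigma_B^2\ge 0$. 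Hence $\mathcal{B}''(\lambda) = 2A > 0$ on all of $\mathbb{R}$, which gives strict convexity.

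Next, set the first-order condition $\mathcal{B}'(\lambda) = 2A\lambda - 2\nu\sigma_1^2 = 0$. This has the unique root $\lambda^* = \nu\sigma_1^2/A$, which is exactly Eq.~\eqref{eq:opt_lambda_app}. Strict convexity makes this stationary point the unique global minimizer over $\mathbb{R}$.

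The only step needing care is the admissible range of $\lambda$. Lemma~\ref{lem:bias} and Theorem~\ref{thm:aggregator} use $\lambda_t\in[0,1]$, so I would check that $\lambda^*$ lies there. Since $\nu\in(0,1]$ and $\sigma_1^2\le A$, we have $0<\lambda^*\le \nu\le 1$. The unconstrained minimizer is therefore feasible and also minimizes over $[0,1]$, so no projection or boundary case analysis is needed.

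Finally, I would note that $\mathcal{B}(\lambda)$ is exactly the $\lambda$-dependent part of the bound in Theorem~\ref{thm:aggregator} when $\lambda_t\equiv\lambda$ is held constant. This is because the time averages $\overline{(1-2\nu\lambda_t+\lambda_t^2)}$ and $\overline{(1+\lambda_t^2)}$ collapse to their constant-$\lambda$ values. Consequently $\lambda^*$ minimizes the bias--variance part of the Eq.~\eqref{eq:thm1} bound, which justifies calling it the optimal replay coefficient.
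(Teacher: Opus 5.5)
Your proof is correct and matches the paper's argument. Both show strict convexity via $\mathcal{B}''(\lambda) = 2(\sigma_1^2 + 2L_1\eta_\phi\sigma_B^2) > 0$ and then solve the first-order condition to obtain $\lambda^*$. Your additional check that $0 < \lambda^* \le \nu \le 1$, so the minimizer lies in the range $[0,1]$ used by Lemma~\ref{lem:bias} and Theorem~\ref{thm:aggregator}, is a worthwhile step that the paper leaves implicit.
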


\begin{proof}
Differentiating~\eqref{eq:BV} and setting $\mathrm{d}\mathcal{B}/\mathrm{d}\lambda = 0$:
\[
  (-2\nu + 2\lambda)\,\sigma_1^2 + 4 L_1 \eta_\phi \lambda\,\sigma_B^2 = 0,
\]
which yields Eq.~\eqref{eq:opt_lambda_app}. Strict convexity follows from 
$\mathrm{d}^2\mathcal{B}/\mathrm{d}\lambda^2 = 2\sigma_1^2 + 4L_1\eta_\phi\sigma_B^2 > 0$.
\end{proof}

Two limiting regimes follow:
\begin{itemize}
  \item When $\nu\sigma_1^2 \gg L_1 \eta_\phi \sigma_B^2$ (bias-dominated, strong 
    complementarity), $\lambda^* \to \nu \le 1$: replay should be fully exploited 
    to compress the bias.
  \item When $\nu\sigma_1^2 \ll L_1 \eta_\phi \sigma_B^2$ (variance-dominated, 
    heavily stale caches), $\lambda^* \to 0$: replay should be suppressed to 
    control the variance.
\end{itemize}
In practice, the staleness-driven mechanism $\lambda_t = \lambda_0^{\bar{a}_t(R_t)}$ 
in CREWS implicitly tracks this optimum: when caches are fresh, $\bar{a}_t$ is 
small and $\lambda_t$ is large, corresponding to the bias-dominated regime; when 
caches age, $\bar{a}_t$ grows and $\lambda_t$ shrinks, corresponding to the 
variance-dominated regime. This adaptive adjustment approximates 
Eq.~\eqref{eq:opt_lambda_app} without requiring explicit estimation of 
$\sigma_1^2$ or $\sigma_B^2$.

\end{document}